\author{Davide~Bilò\inst{1}\orcidID{0000-0003-3169-4300} \and
Luciano~Gualà\inst{2}\orcidID{0000-0001-6976-5579} \and
Stefano~Leucci\inst{1}\orcidID{0000-0002-8848-7006} \and
Luca~Pepè~Sciarria\inst{2}\orcidID{0000-0003-4432-6099}}
\authorrunning{D. Bilò et al.}
\institute{Department of Information Engineering, Computer Science and Mathematics, University of L'Aquila, Italy, University of L'Aquila, L'Aquila, Italy\\
\email{\{davide.bilo, stefano.leucci\}@univaq.it}
\and
Department of Enterprise Engineering, University of Rome ``Tor Vergata'', University of Rome ``Tor Vergata'', Rome, Italy\\
\email{guala@mat.uniroma2.it, luca.pepesciarria@gmail.com}
}
\title{Finding Diameter-Reducing Shortcuts in Trees}
\newcommand{\diam}{\mathrm{diam}}
\newcommand{\dist}[3]{d_{#1}({#2},{#3})}
\newcommand{\dtie}[3]{{\delta}_{#1}({#2},{#3})}
\newcommand{\T}{\mathcal{T}}
\newcommand{\E}{\mathcal{E}}
\newcommand{\D}{\mathcal{D}}
\newcommand{\lca}{\textsc{lca}}
\newcommand{\kdoat}{\textsc{$k$-doat}\xspace}
\newcommand{\doat}{\textsc{doat}\xspace}
\begin{document}

\maketitle

\begin{abstract}
In the \emph{$k$-Diameter-Optimally Augmenting Tree Problem} we are given a tree $T$ of $n$ vertices as input. The tree is embedded in an unknown \emph{metric} space and we have unlimited access to an oracle that, given two distinct vertices $u$ and $v$ of $T$, can answer queries reporting the cost of the edge $(u,v)$ in constant time. We want to augment $T$ with $k$ shortcuts in order to minimize the diameter of the resulting graph. 

For $k=1$, $O(n \log n)$ time algorithms are known both for paths [Wang, CG 2018] and trees [Bilò, TCS 2022]. In this paper we investigate the case of multiple shortcuts. We show that no algorithm that performs $o(n^2)$ queries can provide a better than $10/9$-approximate solution for trees for $k\geq 3$. For any constant $\varepsilon > 0$, we instead design a linear-time $(1+\varepsilon)$-approximation algorithm for paths and $k = o(\sqrt{\log n})$, thus establishing a dichotomy between paths and trees for $k\geq 3$.
We achieve the claimed running time by designing an ad-hoc data structure, which also serves as a key component to provide a linear-time $4$-approximation algorithm for trees, and to compute the diameter of graphs with $n + k - 1$ edges in time $O(n k \log  n)$ even for non-metric graphs.
Our data structure and the latter result are of independent interest.
\end{abstract}

\keywords{Tree diameter augmentation \and Fast diameter computation \and Approximation algorithms \and Time-efficient algorithms}

\section{Introduction}
The \emph{$k$-Diameter-Optimally Augmenting Tree Problem} (\kdoat) is defined as follows. The input consists of a tree $T$ of $n$ vertices that is embedded in an \emph{unknown} space $c$. The space $c$ associates a non-negative cost $c(u,v)$ to each pair of vertices $u,v \in V(T)$, with $u \neq v$. The goal is to quickly compute a set $S$ of $k$ \emph{shortcuts} whose addition to $T$ minimizes the \emph{diameter} of the resulting graph $T+S$.
The diameter of a graph $G$, with a cost of $c(u,v)$ associated with each edge $(u,v)$ of $G$, is defined as $\diam(G) := \max_{u,v \in V(G)} \dist{G}{u}{v}$, where $\dist{G}{u}{v}$ is the distance in $G$ between $u$ and $v$ that is measured w.r.t.\ the edge costs.
We assume to have access to an oracle that answers a query about the cost $c(u,v)$ of any tree-edge/shortcut $(u,v)$ in constant time. 

When $c$ satisfies the triangle inequality, i.e., $c(u,v) \leq c(u,w)+c(w,v)$ for every three distinct vertices $u,v,w \in V(T)$, we say that $T$ is embedded in a {\em metric space}, and we refer to the problem as \emph{metric} \kdoat.

\kdoat and metric \kdoat have been extensively studied for $k=1$.
\kdoat has a trivial lower bound of $\Omega(n^2)$ on the number of queries needed to find $S$ even if one is interested in any finite approximation and the input tree is actually a path.\footnote{As an example, consider an instance $I$ consisting of two subpaths $P_1, P_2$ of $\Theta(n)$ edges each and cost $0$. The subpaths are joined by an edge of cost $1$, and the cost of all shortcuts is $1$.  Any algorithm that does not examine the cost of some shortcut $(u,v)$ with one endpoint in $P_1$ and the other in endpoint in $P_2$ cannot distinguish between $I$ and the instance $I'$ obtained from $I$ by setting $c(u,v)=0$. The claim follows by noticing that there are $\Theta(n^2)$ such shortcuts $(u,v)$ and that the optimal diameters of $I$ and $I'$ are $1$ and $0$, respectively.} On the positive side, it can be solved in $O(n^2)$ time and $O(n \log n)$ space~\cite{Bilo22a}, or in $O(n^2 \log n)$ time and $O(n)$ space~\cite{WangZ21}. Interestingly enough, this second algorithm uses, as a subroutine, a linear time algorithm to compute the diameter of a unicycle graph, i.e., a connected graph with $n$ edges.

Metric \kdoat has been introduced in~\cite{DBLP:journals/corr/GrosseGKSS16} where an $O(n \log^3 n)$-time algorithm is provided for the special case in which the input is a path. In the same paper the authors design a less efficient algorithm for trees that runs in $O(n^2 \log n)$ time. The upper bound for the path case has been then improved to $O(n \log n)$ in~\cite{DBLP:conf/wads/wang}, while in~\cite{Bilo22a} it is shown that the same asymptotic upper bound can be also achieved for trees. Moreover, the latter work also gives a $(1+\varepsilon)$-approximation algorithm for trees with a running time of $O(n+\frac{1}{\varepsilon} \log \frac{1}{\varepsilon})$.

\paragraph*{Our results.}
In this work we focus on metric \kdoat for $k>1$. In such a case one might hope that $o(n^2)$ queries are enough for an exact algorithm. Unfortunately, we show that this is not the case for $3 \le k = o(n)$, even if one is only searching for a $\sigma$-approximate solution with $\sigma < \frac{10}{9}$. Our lower bound is unconditional, holds for trees (with many leaves), and trivially implies an analogous lower bound on the time complexity of any $\sigma$-approximation algorithm.

Motivated by the above lower bound, we focus on approximate solutions and we show two linear-time algorithms with approximation ratios of $4$ and $1+\varepsilon$, for any constant $\varepsilon>0$. The latter algorithm only works for trees with few leaves and $k=o(\sqrt{\log n})$. This establishes a dichotomy between paths and trees for $k \geq 3$: paths can be approximated within a factor of $1+\varepsilon$ in linear-time, while trees (with many leaves) require $\Omega(n^2)$ queries (and hence time) to achieve a better than $10/9$ approximation. Notice that this is not the case for $k=1$. We leave open the problems of understanding whether exact algorithms using $o(n^2)$ queries can be designed for $2$-\doat  on trees and \kdoat on paths.

To achieve the claimed linear-time complexities of our approximation algorithms, we develop an ad-hoc data structure which allows us to quickly compute a small set of well-spread vertices with large pairwise distances. These vertices are used as potential endvertices for the shortcuts. Interestingly, our data structure can also be used to compute the diameter of a \emph{non-metric} graph with $n+k-1$ edges in $O(n k\log n)$ time. For $k=O(1)$, this extends the $O(n)$-time algorithm in~\cite{WangZ21} for computing the diameter of a unicycle graph, with only a logarithmic-factor slowdown. We deem this result of independent interest as it could serve as a tool to design efficient algorithms for \kdoat, as shown for $k=1$ in \cite{WangZ21}.

\paragraph*{Other related work.}

The problem of minimizing the diameter of a graph via the addition of $k$ shortcuts has been extensively studied in the classical setting of optimization problems. This problem is shown to be $\mathsf{NP}$-hard~\cite{DBLP:journals/jgt/SchooneBL87}, not approximable within logarithmic factors unless $\mathsf{P}=\mathsf{NP}$~\cite{BiloGP12},  and some of its variants -- parameterized w.r.t.\ the overall cost of the added shortcuts and w.r.t.\ the resulting diameter -- are even W$[2]$-hard~\cite{FratiGGM15, DBLP:journals/dam/GaoHN13}.
As a consequence, the literature has focused on providing polynomial time approximation algorithms for all these variants~\cite{abs-2209-00370, BiloGP12, ChepoiV02, DBLP:conf/swat/DemaineZ10, FratiGGM15, LiMS92}. 
This differs from \kdoat, where the emphasis is on $o(n^2)$-time algorithms.

Finally, variants of $1$-\doat in which one wants to minimize either the radius or the continuous diameter of the augmented tree have been studied~\cite{DBLP:conf/isaac/0001A16a,DBLP:conf/swat/CarufelMS16,DBLP:conf/wads/CarufelGSS17,DBLP:journals/comgeo/JohnsonW21,DBLP:conf/wads/GudmundssonS21,DBLP:conf/isaac/GudmundssonSY21}. 

\paragraph*{Paper organization.} In Section~\ref{sec:lower_bounds} we present our non-conditional lower bound. Section~\ref{sec:diam_computation} is devoted to the algorithm for computing the diameter of a graph with $n+k-1$ edges in time $O(nk \log n)$. This relies on our data structure, which is described in Section~\ref{sec:ds}.
Finally, in Section~\ref{sec:linear_time_algorithms} we provide our linear-time approximation algorithms for metric \kdoat that use our data structure.
The implementation of our data structure and its analysis, along with some proofs throughout the paper, are deferred to the appendix.

\section{Lower Bound}\label{sec:lower_bounds}

This section is devoted to proving our lower bound on the number of queries needed to solve \kdoat on trees for $k \ge 3$. We start by considering the case $k=3$.

\begin{figure}[t]
  \centering
  \includegraphics[scale=1]{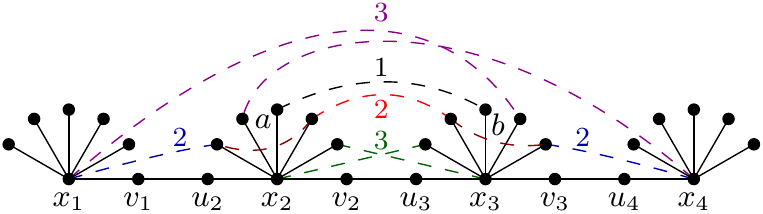}  
 \caption{The graph $G$ of the lower bound construction. The edges of the tree $T$ are solid and have cost $2$; the non-tree edges are dashed and their colors reflect the different types of augmenting edges as defined in the proof of Lemma~\ref{lemma:lb_trees}. To reduce clutter, only some of the augmenting edges are shown.}
 \label{fig:tree_lower_bound}
\end{figure}

\begin{lemma}\label{lemma:lb_trees}
For any sufficiently large $n$, there is a class $\mathcal{G}$ of instances of metric $3$-\doat satisfying the following conditions:
\begin{enumerate}[(i)]
\item In each instance $\langle T, c \rangle \in \mathcal{G}$, $T$ is a tree with $\Theta(n)$ vertices and all tree-edge/shortcut costs assigned by $c$ are positive integers;
\item No algorithm can decide whether an input instance $\langle T, c \rangle$ from $\mathcal{G}$ admits a solution $S$ such that $\diam(T+S) \le 9$ using $o(n^2)$ queries.
\end{enumerate}
\end{lemma}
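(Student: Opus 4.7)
The plan is to reduce the ``diameter at most $9$'' decision problem to identifying a hidden pair of vertices $(u^*, v^*)$ among $\Omega(n^2)$ essentially indistinguishable possibilities, so that any algorithm making $o(n^2)$ queries is provably fooled on some instance of the family. This is the same high-level pattern as the $k=1$ lower bound in the footnote, but carried out inside a richer gadget that accommodates three shortcuts.

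Following the figure, I would take $T$ to consist of a short central structure (a single vertex or a tiny subtree) with several long ``arms'', each a path of $\Theta(n)$ vertices, and every tree edge of cost $2$. Potential shortcuts are partitioned into a few types according to which arms their endpoints lie in (the colors in the figure), corresponding to distinct roles they can play in an optimal $3$-shortcut augmentation. Since $\mathrm{diam}(T)=\Theta(n)$, bringing the diameter down to a constant with only $3$ shortcuts forces them to collectively ``cover'' every arm, so any residual $T+S$-path between two far vertices in different arms must traverse at least one shortcut.

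Next, I would define a baseline cost function $c$ and a family of perturbations $\{c_{u^*,v^*}\}$ indexed by pairs $(u^*,v^*)$ lying in two prescribed middle regions of two selected arms, each region containing $\Theta(n)$ vertices, so that $|\{(u^*,v^*)\}|=\Omega(n^2)$. In $c$, every shortcut receives the largest integer cost compatible with the triangle inequality, calibrated so that \emph{every} $3$-shortcut augmentation yields diameter at least $10$. The perturbation $c_{u^*,v^*}$ differs from $c$ only in lowering the cost of the single shortcut $(u^*, v^*)$ just enough that combining it with two fixed ``structural'' shortcuts (whose endpoints are pinned at the arm tips or the center, and are unambiguously identifiable from the costs) attains $\mathrm{diam}(T+S)=9$. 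The adversarial argument is then immediate: any algorithm making $o(n^2)$ queries on input $c$ misses $(u^*, v^*)$ for at least one admissible pair; its transcript on $c_{u^*,v^*}$ is identical to that on $c$, hence it returns the same answer on both, contradicting the fact that the correct answers differ.

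The main obstacle will be the careful verification of two geometric properties of the construction: (a) the baseline costs globally satisfy the triangle inequality, so that the instance is genuinely metric; and (b) no ``unintended'' triple of baseline shortcuts drops the diameter below $10$. Both reduce to a finite case analysis over how three shortcuts can attach to the arms, combined with the observation that each arm is long enough ($\Theta(n)$) to exhibit a pair of vertices at $T+S$-distance $\ge 10$ whenever a shortcut is absent from a critical location. Keeping all costs positive integers is easy, as the gap between the targets $9$ and $10$ leaves ample slack for an integer assignment.
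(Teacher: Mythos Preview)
Your adversary argument is the right scaffold, but the concrete gadget you sketch cannot work. You propose a tree whose arms are \emph{paths of $\Theta(n)$ vertices}, so that $\diam(T)=\Theta(n)$, and you hide the special pair $(u^*,v^*)$ in ``middle regions'' of two such arms. The problem is that with only three shortcuts there are at most six shortcut endpoints, and along any arm of $\Theta(n)$ vertices there will be a vertex $w$ whose tree-distance to every shortcut endpoint is $\Theta(n)$. Any path in $T+S$ leaving $w$ must begin with a tree-subpath to one of those endpoints, so the eccentricity of $w$ in $T+S$ is $\Theta(n)$. Hence \emph{no} instance in your family, baseline or perturbed, admits a solution with $\diam(T+S)\le 9$, and the decision problem becomes trivially answerable with zero queries. (This is not an accident: it is exactly why the paper obtains a $(1+\varepsilon)$-approximation in linear time when $T$ has few leaves; a path-like tree cannot support the lower bound.)

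The paper's construction avoids this by taking $T$ to have \emph{constant} diameter: it glues together four \emph{stars} with $n$ leaves each, centers $x_1,\dots,x_4$ joined by short paths, all edge costs equal to $2$. Every leaf is at tree-distance $2$ from its center, so constant diameter after augmentation is attainable. The $\Theta(n^2)$ hidden pairs are drawn from $L_2\times L_3$, the leaf sets of the two inner stars; in the baseline instance $I$ each such shortcut costs $2$, while in $I_{a,b}$ the single edge $(a,b)$ costs $1$. The delicate work is then (1) exhibiting $S=\{(x_1,a),(a,b),(b,x_4)\}$ with $\diam(T+S)\le 9$ in $I_{a,b}$, (2) showing that in $I$ every triple of shortcuts leaves some pair at distance $\ge 10$ (a short case analysis forcing one shortcut between each consecutive pair of ``blocks'' $B_i,B_{i+1}$), and (3) checking that dropping $c(a,b)$ from $2$ to $1$ changes no other shortest-path distance, so the two instances are indistinguishable outside the query $(a,b)$. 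Your outline does identify the need for (a) and (b), but the underlying tree must be star-like, not path-like, for the argument to get off the ground.
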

\begin{proof}
We first describe the $3$-\doat instances. All instances $\langle T, c \rangle \in \mathcal{G}$ share the same tree $T$, and only differ in the cost function $c$.
The tree $T$ is defined as follows: consider $4$ identical stars, each having $n$ vertices and denote by $x_i$ the center of the $i$-th star, by $L_i$ the set of its leaves, and let $X_i =L_i \cup \{x_i\}$. The tree $T$ is obtained by connecting each pair $x_i$, $x_{i+1}$ of centers with a path of three edges $(x_i, v_i), (v_i, u_{i+1})$, and $(u_{i+1}, x_{i+1})$, as in Figure~\ref{fig:tree_lower_bound}. All the tree edges $(u,v)$ have the same cost $c(u,v) = 2$.

The class $\mathcal{G}$ contains an instance $I_{a,b}$ for each pair $(a,b) \in L_2 \times L_3$, and an additional instance $I$.
Fix $(a,b) \in L_2 \times L_3$. The costs of the shortcuts in $I_{a,b}$ are defined w.r.t.\ a graph $G_{a,b}$ obtained by augmenting $T$ with the following edges:
\begin{itemize}
\item All edges $(x_1, y)$ with $y \in L_2$, and all edges $(x_4,y)$ with $y \in L_3$ with cost $2$.
\item The edges $(y,z)$ for every $y \in L_2$ and every $z \in L_3$. The cost $(a,b)$ is $1$, while the cost  of all other edges is $2$;
\item The edges $(y, z)$ for every distinct pair of vertices $y, z$ that are both in $L_2$ or both in $L_3$. The cost of all such edges is $3$.
\item All edges $(x_2,y)$ with $y \in L_3$, and all edges $(x_3,y)$ with $y \in L_2$ with cost $3$; 
\item All edges $(x_1,y)$ with $y \in L_3$, and all edges $(x_4,y)$ with $y \in L_2$ with cost $3$.
\end{itemize}
We define the cost $c(u,v)$ of all remaining shortcuts $(u,v)$ as $\dist{G_{a,b}}{u}{v}$.

We now argue that $c$ satisfies the triangle inequality. Consider any triangle in $G_{a,b}$ having vertices $u,v$, and $w$. We show that the triangle inequality holds for the generic edge $(u,v)$.
As the costs of all edges of $G_{a,b}$, except for the shortcut $(a,b)$, are either $2$ or $3$ and since $c(a,b)=1$, we have that $c(u,v) \leq 3 \leq c(u,w)+c(w,v)$. Any other triangle clearly satisfies the triangle inequality as it contains one or more edges that are not in $G_{a,b}$ and whose costs are computed using distances in $G_{a,b}$. 

To define the cost function $c$ of the remaining instance $I$ of $\mathcal{G}$, choose any $(a,b) \in L_2 \times L_3$, and let $G$ be the graph obtained from $G_{a,b}$ by changing the cost of  $(a,b)$ from $1$ to $2$. We define $c(u,v) = \dist{G}{u}{v}$.
Notice that, in $G$, all edges $(u,v) \in L_2 \times L_3$ have cost $2$ and that the above arguments also show that $c$ still satisfies the triangle inequality.

Since our choice of $I_{a,b}$ and $I$ trivially satisfies (i), we now focus on proving (ii). 

We start by showing the following facts: (1) each instance $I_{a,b}$ admits a solution $S$ such that $\diam(T+S) \le 9$; (2) all solutions $S$ to $I$ are such that $\diam(T+S) \ge 10$; (3) if $u \neq a$ or $v \neq b$ then $\dist{G}{u}{v} = \dist{G_{a,b}}{u}{v}$.

To see (1), consider the set $S=\{(x_1,a),(a,b),(b,x_4)\}$ of $3$ shortcuts. We can observe that $\diam(T+S) \leq 9$. This is because $\dist{T+S}{x_i}{x_j}  \leq 5$ for every two star centers $x_i$ and $x_j$ (see also Figure~\ref{fig:tree_lower_bound}). Moreover, each vertex $u \in L_i$ is at a distance of at most $2$ from $x_i$. Therefore, for every two vertices $u \in X_i$ and $v \in X_j$, we have that 
$\dist{T+S}{u}{v} \leq \dist{T}{u}{x_i} + \dist{T+S}{x_i}{x_j} + \dist{T}{x_j}{v} \leq 2 + 5 + 2 = 9$.

Concerning (2), let us consider any solution $S$ of $3$ shortcuts and define $B_i$ as the set of vertices $X_i$ plus the vertices $u_i$ and $v_i$, if they exist.
We show that if there is no edge in $S$ between $B_{i}$ and $B_{i+1}$ for some $i=1,\dots,3$, then $\diam(T+S) \ge 10$. To this aim, suppose that this is the case, and let $u \in L_i$ and $v \in L_{i+1}$ be two vertices that are not incident to any shortcut in $S$.
Notice that the shortest path in $T+S$ between $u$ and $v$ traverses $x_i$ and $x_{i+1}$, and hence $\dist{T+S}{u}{v} = \dist{T+S}{u}{x_{i}} + \dist{T+S}{x_i}{x_{i+1}} + \dist{T+S}{x_{i+1}}{v} = 4 + \dist{T+S}{x_i}{x_{i+1}}$. We now argue that $\dist{T+S}{x_i}{x_{i+1}} \ge 6$. Indeed, we have that all edges in $T+S$ cost at least $2$, therefore $\dist{T+S}{x_i}{x_{i+1}} < 6$ would imply that the shortest path $\pi$ from $x_i$ to $x_{i+1}$ in $T+S$ traverses a single intermediate vertex $z$. By assumption, $z$ must belong to some $B_j$ for $j \not\in \{i, i+1\}$. However, by construction of $G$, we have $c(x_{i},z) \ge 3$ and $c(x_{i+1}, z) \ge 3$ for every such $z \in B_j$. 

Hence, we can assume that we have a single shortcut edge between $B_i$ and $B_{i+1}$, for $i=1,\dots, 3$. Let $x \in L_1$ (resp. $y \in L_4$) such that no shortcut in $S$ is incident to $x$ (resp. $y$). Notice that every path in $T+S$ from $x$ to $y$ traverses at least $5$ edges and, since all edges cost at least $2$, we have $\diam(T+S) \ge \dist{T+S}{x}{y} \ge 10$.

We now prove (3). Let $u,v$ be two vertices such that there is a shortest path $\pi$ in $G_{a,b}$ from $u$ to $v$ traversing the edge $(a,b)$. We show that there is another shortest path from $u$ to $v$ in $G_{a,b}$ that avoids edge $(a,b)$. Consider a subpath $\pi'$ of $\pi$ consisting of two edges one of which is $(a,b)$. Let $w \neq a,b$ be one of the endvertices of $\pi'$ and let $w' \in \{a,b\}$ be the other endvertex of $\pi'$. Observe that edges $(a,b)$, $(a,w)$, and $(w,b)$ forms a triangle in $G_{a,b}$ and $c(a,w),c(w,b) \in \{2,3\}$. Since $c(a,b)=1$, we have $c(w,a)\leq c(w,b)+c(b,a)$ and $c(w,b) \leq c(w,a)+c(a,b)$. This implies  
that we can shortcut $\pi'$ with the edge $(w,w')$ thus obtaining another shortest path from $u$ to $v$ that does not use the edge $(a,b)$. 

We are finally ready to prove (ii). We suppose towards a contradiction that some algorithm $\mathcal{A}$ requires $o(n^2)$ queries and, given any instance $\langle T, c \rangle \in \mathcal{G}$, decides whether $\diam(T+S) \le 9$ for some set $S$ of $3$ shortcuts.\footnote{For the sake of simplicity, we consider deterministic algorithms only. However, standard arguments can be used to prove a similar claim also for randomized algorithms.}
By (2), $\mathcal{A}$ with input $I$ must report that there is no feasible set of shortcuts that achieves diameter at most $9$. Since $\mathcal{A}$ performs $o(n^2)$ queries, for all sufficiently large values of $n$, there must be an edge $(a,b)$ with $a \in L_2$ and $b \in L_3$ whose cost $c(a,b)$ is not inspected by $\mathcal{A}$.
By (3), the costs of all the edges $(u,v)$ with $(u,v) \neq (a,b)$ are the same in the two instances $I$ and $I_{a,b}$ and hence $\mathcal{A}$ must report that $I_{a,b}$ admits no set of shortcuts $S$ such that $\diam(T+S) \le 9$. This contradicts (1). \qed
\end{proof}

With some additional technicalities we can generalize  Lemma~\ref{lemma:lb_trees} to $3 \le k = o(n)$, which immediately implies a lower bound on the number of queries needed by any $\sigma$-approximation algorithm with $\sigma < 10/9$.
\begin{restatable}{lemma}{lemmalbtreesgeneralized}
\label{lemma:lb_trees_generalized}
For any sufficiently large $n$ and $3 \le k = o(n)$, there is a class $\mathcal{G}$ of instances of metric $k$-\doat satisfying the following conditions:
\begin{enumerate}[(i)]
\item In each instance $\langle T, c \rangle \in \mathcal{G}$, $T$ is a tree with $\Theta(n)$ vertices and all tree-edge/shortcut costs assigned by $c$ are positive integers;
\item No algorithm can decide whether an input instance $\langle T, c \rangle$ from $\mathcal{G}$ admits a solution $S$ such that $\diam(T+S) \le 9$ using $o(n^2)$ queries.
\end{enumerate}
\end{restatable}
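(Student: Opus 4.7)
The plan is to augment the $4$-star tree $T$ of Lemma~\ref{lemma:lb_trees} with $k-3$ \emph{pendant gadgets} that each consume exactly one shortcut in any solution achieving diameter at most $9$. Concretely, at the center $x_2$, I attach $k-3$ vertex-disjoint paths of $5$ tree edges of cost $2$, so that the far leaf $p_{i,5}$ of the $i$-th pendant is at tree distance $10$ from $x_2$. The cost function $c$ is defined as the shortest-path distance in an auxiliary graph $G_{a,b}$ that extends the one of Lemma~\ref{lemma:lb_trees} by the pendant tree edges together with a special edge $(p_{i,5}, x_2)$ of cost $1$ for each $i$; setting $c(a,b) = 2$ defines the bad instance $I$ exactly as before. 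This automatically ensures the triangle inequality, and yields a tree with $4n + 5(k-3) = \Theta(n)$ vertices since $k = o(n)$.

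For the good instance $I_{a,b}$, I would use the three shortcuts $\{(x_1,a),(a,b),(b,x_4)\}$ of Lemma~\ref{lemma:lb_trees} together with the $k-3$ ``ideal'' pendant shortcuts $(p_{i,5}, x_2)$ of cost $1$; this places every pendant leaf within distance $1$ of $x_2$, and since $x_2$ is within distance $7$ of every main-tree vertex under that solution, pendant-involving pairs have distance at most $8$ while main-main pairs have distance at most $9$ by Lemma~\ref{lemma:lb_trees}. For the bad instance $I$, the central claim is that any $k$-shortcut solution $S$ of diameter at most $9$ must contain the ideal shortcut $(p_{i,5}, x_2)$ for every $i$: picking any $v' \in L_4$, one has $\dist{G}{x_2}{v'} = 7$, so achieving $\dist{T+S}{p_{i,5}}{v'} \le 9$ forces $\dist{T+S}{p_{i,5}}{x_2} \le 2$; inspecting the shortest-path metric of $G$ shows that the only shortcut of cost at most $2$ incident to $p_{i,5}$ is $(p_{i,5}, x_2)$, and in particular any ``chain'' through another pendant via $(p_{i,5}, p_{j,5})$ (cost $2$) plus $(p_{j,5}, x_2)$ (cost $1$) yields distance $3$. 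Consequently $S$ devotes at least $k-3$ shortcuts to pendants, leaving at most $3$ for the main $4$-star, which by Lemma~\ref{lemma:lb_trees} cannot bring the main diameter below $10$ in $I$.

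The rest of the argument mirrors the proof of Lemma~\ref{lemma:lb_trees}: an algorithm using $o(n^2)$ queries cannot inspect the cost of some critical pair $(a,b) \in L_2 \times L_3$, so it must answer identically on $I$ and that specific $I_{a,b}$, contradicting the differing correct answers. I expect the main obstacle to be the exhaustive case analysis for the bad instance, in particular ruling out creative shortcut allocations that try to substitute for a missing ideal pendant shortcut using combinations of non-ideal shortcuts $(p_{i,5}, w)$ with $w \in \{x_1, x_3, x_4, \dots\}$ of costs $5$ or $6$, possibly together with additional main-tree shortcuts: in each case the lower bound $\dist{T+S}{u}{v} \ge \dist{G}{u}{v}$ combined with a shortcut-budget argument forces either some pendant leaf or the main $4$-star to contribute distance at least $10$.
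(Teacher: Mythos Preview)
Your construction breaks already on the good instance. The pendant paths have \emph{intermediate} vertices $p_{i,1},\dots,p_{i,4}$, and these remain too far from $L_4$ even after you add the ideal shortcut $(p_{i,5},x_2)$. Concretely, with your solution $S=\{(x_1,a),(a,b),(b,x_4)\}\cup\{(p_{i,5},x_2)\}_i$, the middle vertex $p_{i,3}$ satisfies $d_{T+S}(p_{i,3},x_2)=\min(6,\,2+2+1)=5$, while any $v'\in L_4$ untouched by shortcuts has $d_{T+S}(x_2,v')=7$; since every path from $p_{i,3}$ to the main tree passes through $x_2$, this gives $d_{T+S}(p_{i,3},v')=12$. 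Worse, this is a \emph{metric} obstruction: in your auxiliary graph $G'_{a,b}$ one has $d_{G'_{a,b}}(p_{i,3},x_2)=5$ and $d_{G_{a,b}}(x_2,v')=7$, so $c(p_{i,3},v')=12$ and hence $d_{T+S}(p_{i,3},v')\ge 12$ for \emph{every} choice of $S$. Thus $I_{a,b}$ admits no solution of diameter $\le 9$, and the intended dichotomy between $I_{a,b}$ and $I$ collapses. Shortening the pendants does not help either: to keep the farthest pendant vertex within distance $2$ of $x_2$ after one shortcut you would need pendants of length~$1$, which then require no shortcut at all.

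The paper sidesteps this by using \emph{single-vertex} gadgets rather than paths: it relocates $k-3$ leaves $Z\subseteq L_1$ to hang off a fixed $y\in L_4$ via a tree edge of cost $9$ (the pre-existing metric distance $d_G(y,z)$), and forces the shortcut $(z,x_1)$ of cost $2$. Once that shortcut is present, $z$ is indistinguishable from an ordinary $L_1$-leaf, so there are no intermediate gadget vertices to control. Your forcing argument for the bad instance is also too quick---the step ``$d_{T+S}(p_{i,5},v')\le 9$ forces $d_{T+S}(p_{i,5},x_2)\le 2$'' tacitly assumes the shortest path goes through $x_2$, which need not hold once arbitrary shortcuts from $p_{i,5}$ are allowed---but this is secondary to the structural failure above.
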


\begin{theorem}
    There is no $o(n^2)$-query $\sigma$-approximation algorithm for metric \kdoat with $\sigma < 10/9$ and $3 \leq k =o(n)$.
\end{theorem}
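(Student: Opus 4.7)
The plan is to reduce the decision problem of Lemma~\ref{lemma:lb_trees_generalized} (whether $\diam(T+S) \le 9$ is achievable) to approximating metric \kdoat within a factor strictly smaller than $10/9$, and to derive a contradiction from the lemma's $\Omega(n^2)$ query lower bound.

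Concretely, suppose towards a contradiction that there is an $o(n^2)$-query $\sigma$-approximation algorithm $\mathcal{A}$ for metric \kdoat with $\sigma < 10/9$ and $3 \le k = o(n)$. I would run $\mathcal{A}$ on the instances $\langle T, c \rangle \in \mathcal{G}$ produced by Lemma~\ref{lemma:lb_trees_generalized} and inspect the diameter of the augmented graph $T + S'$, where $S'$ is the set of shortcuts returned by $\mathcal{A}$. Computing $\diam(T+S')$ once $S'$ is known does not require any further queries to the oracle (the costs of all relevant edges have either been queried by $\mathcal{A}$ or can be queried within an additional $O(k^2) = o(n^2)$ queries).

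The crucial observation is that all tree-edge and shortcut costs in instances of $\mathcal{G}$ are positive integers, so the diameter of any augmented graph $T+S'$ is itself a positive integer. For an instance admitting a solution with $\diam(T+S) \le 9$, the optimum is at most $9$ and hence $\diam(T+S') \le \sigma \cdot \mathrm{OPT} \le 9\sigma < 10$, which forces $\diam(T+S') \le 9$. For an instance in which no solution achieves diameter at most $9$, every feasible $S'$ yields $\diam(T+S') \ge 10$. Thus, by checking whether $\diam(T+S') \le 9$ or $\diam(T+S') \ge 10$, we would decide the question of Lemma~\ref{lemma:lb_trees_generalized} using only $o(n^2)$ queries, contradicting part~(ii) of that lemma.

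I do not expect any genuine obstacle here: the argument is essentially a direct reduction, the only subtlety being the integrality of edge costs (which was built into Lemma~\ref{lemma:lb_trees_generalized}.(i) precisely to allow this gap amplification from $9$ to $10$). The most delicate point to state carefully is that a $\sigma$-approximation algorithm's guarantee bounds $\diam(T+S')$ even when the optimum is unknown, so the decision procedure above is well-defined without any additional assumption on $\mathcal{A}$.
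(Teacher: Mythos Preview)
Your argument is correct and is precisely the immediate reduction the paper intends; the paper states the theorem without proof, as a direct consequence of Lemma~\ref{lemma:lb_trees_generalized}. One minor quibble: the extra oracle queries needed to evaluate $\diam(T+S')$ are for the $n-1+k$ edges of $T+S'$, so the bound is $O(n)$ rather than $O(k^2)$, but this is still $o(n^2)$ and the conclusion is unaffected.
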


\section{Fast diameter computation}\label{sec:diam_computation}

In this section we describe an algorithm that computes the diameter of a graph on $n$ vertices and $n+k-1$ edges, with non-negative edge costs, in $O(nk\log n)$ time. Before describing the general solution, we consider the case in which the graph is obtained by augmenting a path $P$ of $n$ vertices with $k$ edges as a warm-up.

\subsection{Warm-up: diameter on augmented paths}

 Given a path $P$ and a set $S$ of $k$ shortcuts we show how to compute the diameter of $P + S$. We do that by computing the {\em eccentricity} $\E_s := \max_{v \in V(P)} \dist{P+S}{s}{v}$ of each vertex $s$. Clearly, the diameter of $P+S$ is given by the maximum value chosen among the computed vertex eccentricities, i.e., $\diam(P+S) = \max_{s \in V(P)}\E_s$. Given a subset of vertices $X \subseteq V(P)$, define $\E_s(X) := \max_{v \in X} \dist{P+S}{s}{v}$, i.e., $\E_s(X)$ is the eccentricity of $s$ restricted to $X$.
 
 In the rest of the section we focus on an fixed vertex $s$. We begin by computing a condensed weighted (multi-)graph $G'$. To this aim, we say that a vertex $v$ is \emph{marked as terminal} if it satisfies at least one of the following conditions: (i) $v=s$, (ii) $v$ is an endvertex of some shortcut in $S$, or (iii) $v$ is an endpoint of the path.
 Traverse $P$ from one endpoint to the other and let $v_1, \dots, v_h$ be the marked vertices, in order of traversal. 
 We set the vertex set of $G'$ as the set $M$ of all vertices marked as terminals while the edge set of $G'$ contains (i) all edges $e_i = (v_i, v_{i+1})$ for $i=1,\dots, h-1$, where the cost of edge $e_i$ is $\dist{P}{v_i}{v_{i+1}}$, and (ii) all edges in $S$, with their respective costs. The graph $G'$ has $O(k)$ vertices and edges, and it can be built in $O(k)$ time after a one-time preprocessing of $P$ which requires $O(n)$ time.  See Figure~\ref{fig:shrunk_path} for an example.

We now compute all the distances
from $s$ in $G'$ in $O(k \log k)$ time by running Dijkstra's algorithm. Since our construction of $G'$ ensures that $\dist{P+S}{s}{v} = \dist{G'}{s}{v}$ for every terminal $v$, we now know all the distances $\alpha_i:=\dist{P+S}{s}{v_i}$ with $v_i \in M$. 
 
 For $i=1, \dots, h-1$, define $P_i$ as  the subpath of $P$ between $v_i$ and $v_{i+1}$. In order to find $\E_s(V(P))$, we will separately compute the quantities $\E_s(P_1), \dots, \E_s(P_{h-1})$.\footnote{With a slight abuse of notation we use $P_i$ to refer both to the subpath of $P$ between $v_i$ and $v_{i+1}$ and to the set of vertices therein.}
Fix an index $i$ and let $u$ be a vertex in $P_i$.
Consider a shortest path $\pi$ from $s$ to $u$ in $P+S$ and let $x$ be the last marked vertex traversed by $\pi$ (this vertex always exists since $s$ is marked). We can decompose $\pi$ into two subpaths: a shortest path $\pi_x$ from $s$ to $x$, and a shortest path $\pi_u$ from $x$ to $u$.
 By the choice of $x$, $x$ is the only marked vertex in $\pi_u$. This means that $x$ is either $v_i$ or $v_{i+1}$ and, in particular:
 \[
   \dist{P+S}{s}{u} = \min\left\{ \alpha_i + \dist{P_i}{v_i}{u}, \alpha_{i+1} + \dist{P_i}{v_{i+1}}{u} \right\}.
 \]

 Hence, the farthest vertex $u$ from $s$ among those in $P_i$ is the one that maximizes the right-hand side of the above formula, i.e.:
 \[
    \E_s(P_i) = \max_{u \in P_i} \min\left\{ \alpha_i + \dist{P_i}{v_i}{u}, \alpha_{i+1} + \dist{P_i}{v_{i+1}}{u} \right\}.
 \]
 
 \begin{figure}[t]
    \centering
    \includegraphics[width=\textwidth]{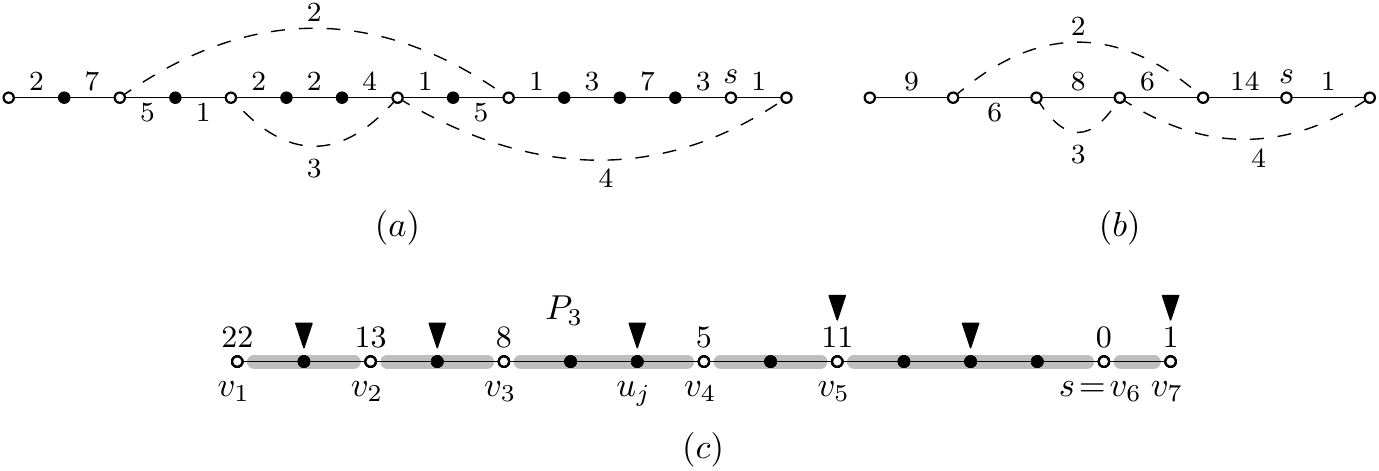}
    \caption{An example showing how to compute the eccentricity of a vertex $s$ in $P+S$. $(a)$ Edges of $P$ are solid, while edges in $S$ are dashed. Vertices marked as terminals are white. $(b)$ The condensed graph $G'$. $(c)$ Every terminal $v_i$ is labeled with its corresponding $\alpha_i$ and the arrows point to the vertices $u_j$ in each (shaded) path $P_i$ such that $j\ge 2$ is the smallest index with $\ell(j) \ge r(j)$.}
    \label{fig:shrunk_path}
\end{figure}
 
 We now describe how such a maximum can be computed efficiently.
 Let $u_j$ denote the $j$-th vertex encountered when $P_i$ is traversed from $v_{i}$ to $v_{i+1}$. The key observation is that the quantity $\ell(j) = \alpha_i + \dist{P_i}{v_i}{u_j}$ is monotonically non-decreasing w.r.t. $j$, while the quantity $r(j) = \alpha_{i+1} + \dist{P_i}{v_{i+1}}{u_j}$ is monotonically non-increasing w.r.t. $j$. Since both $\ell(j)$ and $r(j)$ can be evaluated in constant time once $\alpha_i$ and $\alpha_{i+1}$ are known, we can binary search for the smallest index $j \ge 2$ such that $\ell(j) \ge r(j)$ (see Figure~\ref{fig:shrunk_path}~(c)). Notice that index $j$ always exists since the condition is satisfied for $u_j = v_{i+1}$.

 This requires $O( \log |P_i|)$ time and allows us to return $\E_s(P_i) = \max\{ \ell(j-1), r(j) \}$. 

 After the linear-time preprocessing, the time needed to compute the eccentricity of a single vertex $s$ is then $O(k \log k + \sum_{i=1}^{h-1} \log |P_i|)$. Since $\sum_{i=1}^{h-1}|P_i|<n+h$ and $h = O(k)$, this can be upper bounded by $O(k \log k + k \log \frac{n}{k}) = O(k \log \max \{k, \frac{n}{k}\})$.
 Repeating the above procedure for all vertices $s$, and accounting for the preprocessing time, we can compute the diameter of $P+S$ in time $O(nk \log \max \{k, \frac{n}{k}\})$.

 \subsection{Diameter on augmented trees}

We are now ready to describe the algorithm that computes the diameter of a tree $T$ augmented with a set $S$ of $k$ shortcuts. The key idea is to use the same framework used for the path. Roughly speaking, we define a condensed graph $G'$ over a set of $O(k)$ marked vertices, as we did for the path. Next, we use $G'$ to compute in $O(k \log k)$ time the distance $\alpha_v:=d_{T+S}(s,v)$ for every marked vertex $v$, and then we use these distances to compute the eccentricity $\E_s$ of $s$ in $T+S$. This last step is the tricky one. In order to efficiently manage it, we design an ad-hoc data structure which will be able to compute $\E_s$ in $O(k \log n)$ time once all values $\alpha_v$ are known.
In the following we provide a description of the operations supported by our data structure, and we show how they can be used to compute the diameter of $T+S$. In Appendix~\ref{app:ds} we describe how the data structure can be implemented.

\subsubsection{An auxiliary data structure: description}
\label{sec:ds}

Given a tree $T$ that is rooted in some vertex $r$ and two vertices $u,v$ of $T$, we denote by $\lca_T(u,v)$ their \emph{lowest common ancestor} in $T$, i.e., the deepest node (w.r.t.\ the hop-distance) that is an ancestor of both $u$ and $v$ in $T$.
 Moreover, given a subset of vertices $M$, we define the \emph{shrunk} version of $T$ w.r.t.\ $M$ as the tree $T(M)$ whose vertex set consists of $r$ along with all vertices in $\{ \lca_T(u,v) \mid u,v \in M \}$ (notice that this includes all vertices $v \in M$ since $\lca(v,v) = v$), and whose edge set contains an edge $(u,v)$ for each pair of distinct vertices $u,v$ such that the only vertices in the unique path between $u$ and $v$ in $T$ that belong to $T(M)$ are the endvertices $u$ and $v$.
 We call the vertices in $M$ \emph{terminal} vertices, while we refer to the vertices that are in $T(M)$ but not in $M$ as \emph{Steiner} vertices. See Figure~\ref{fig:shrunk_tree} for an example.
 
 \begin{figure}[t]
     \centering
     \includegraphics[scale=0.8]{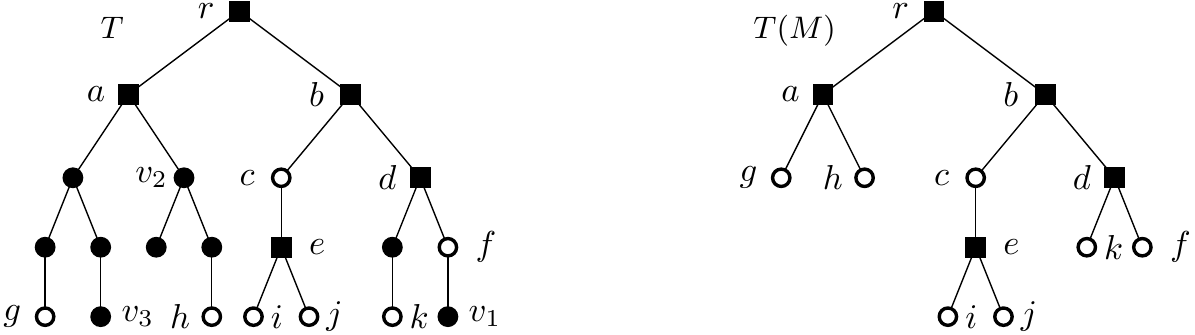}
     \caption{The rooted binary tree $T$ with root $r$ is depicted on the left side. The set of terminal vertices is $M=\{c,f,g,h,i,j,k\}$ and is represented using white vertices. The square vertices are the Steiner vertices. The corresponding shrunk tree $T(M)$ is depicted on the right side.}
     \label{fig:shrunk_tree}
 \end{figure}
 
Our data structure can be built in $O(n)$ time and supports the following operations, where $k$ refers to the current number of terminal vertices:
\begin{description}
    \item[MakeTerminal($v$):] Mark a given vertex $v$ of $T$ as a terminal vertex.  Set $\alpha_v = 0$. This operation requires $O(\log n)$ time.
    
    \item[Shrink()]: Report the shrunk version of $T$ w.r.t.\ the set $M$ of terminal vertices, i.e., report $T(M)$. This operation requires $O(k)$ time.
    
    \item[SetAlpha($v, \alpha$):] Given a terminal vertex $v$, set $\alpha_v = \alpha$. This requires $O(1)$ time. 
    
    \item[ReportFarthest()]: Return $\max_{u \in T} \min_{v \in M} ( \alpha_v + \dist{T}{v}{u}$), where $M$ is the current set of terminal vertices, along with a pair of vertices $(v, u)$ for which the above maximum is attained. This operation requires $O(k \log n)$ time. 
\end{description}
 
\subsubsection{Our algorithm}
In this section we show how to compute the diameter of $T+S$ in time $O(nk \log n)$. We can assume w.l.o.g.\ that the input tree $T$ is binary since, if this is not the case, we can transform $T$ into a binary tree having same diameter once augmented with $S$, and asymptotically the same number of vertices as $T$. 
This transformation requires linear time and is described in Appendix~\ref{sec:binarizzazione}. Moreover, we perform a linear-time preprocessing in order to be able to compute the distance $\dist{T}{u}{v}$ between any pair of vertices in constant time.\footnote{This can be done by rooting $T$ in an arbitrary vertex, noticing that $\dist{T}{u}{v} = \dist{T}{u}{\lca_T(u,v)} + \dist{T}{\lca_T(u,v)}{v}$, and using an \emph{oracle} that can report $\lca(u,v)$ in constant time after a $O(n)$-time preprocessing \cite{HarelT84}.}

We use the data structure $\D$ of Section~\ref{sec:ds}, initialized with the binary tree $T$.
Similarly to our algorithm on paths, we compute the diameter of $T$ by finding the eccentricity $\E_s$ in $T+S$ of each vertex $s$. In the rest of this section we fix $s$ and show how to compute $\E_s$.

We start by considering all vertices $x$ such that either $x=s$ or $x$ is an endvertex of some edge in $S$, and we mark all such vertices as terminals in $\D$ (we recall that these vertices also form the set $M$ of terminals). This requires time $O(k \log n)$.
Next, we compute the distances from $s$ in the (multi-)graph $G'$ defined on the shrunk tree $T(M)$ by (i) assigning weight $\dist{T}{u}{v}$ to each edge $(u,v)$ in $T(M)$, and (ii) adding all edges in $S$ with weight equal to their cost.  This can be done in $O(k \log k)$ time using Dijkstra's algorithm. We let $\alpha_v$ denote the computed distance from $s$ to terminal vertex $v$.

We can now find $\E_s$ by assigning cost $\alpha_v$ to each terminal $v$ in $\D$ (using the SetAlpha($v, \alpha_v$) operation), and performing a ReportFarthest() query. This requires $O(k + k \log n) = O(k \log n)$ time. Finally, we revert $\D$ to the initial state before moving on to the next vertex $s$.\footnote{This can be done in $O(k \log n)$ time by keeping track of all the memory words modified as a result of the operations on $\D$ performed while processing $s$, along with their initial contents. To revert $\D$ to its initial state it suffices to rollback these modifications.} 

\begin{theorem}
Given a graph $G$ on $n$ vertices and $n+k-1$ edges with non-negative edge costs and $k \ge 1$, we can compute the diameter of $G$ in time $O(n k \log n)$.
\end{theorem}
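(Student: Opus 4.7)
The plan is to reduce to the setting already developed in this section, where the input is a tree augmented with a fixed set of $k$ shortcuts. First, I would decide connectedness of $G$ in $O(n+k)$ time and return $+\infty$ if $G$ is disconnected. Otherwise, I would extract a spanning tree $T$ of $G$ in linear time; the remaining $k$ edges form a shortcut set $S$ with $G = T+S$. I would then binarize $T$ using the transformation of Appendix~\ref{sec:binarizzazione}, perform the linear-time LCA/distance preprocessing, and initialize the data structure $\D$ of Section~\ref{sec:ds} on the resulting binary tree, for a total preprocessing cost of $O(n+k)$.

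For each vertex $s \in V(T)$, I would compute its eccentricity $\E_s$ in $T+S$, so that $\diam(G) = \max_s \E_s$. Fixing $s$, I would mark as terminals in $\D$ the vertex $s$ together with the (at most) $2k$ endvertices of shortcuts, forming a set $M$ of size $O(k)$. I would then build the condensed multigraph $G'$ on $M$---shrunk-tree edges weighted by tree distances, plus the shortcut edges themselves---via Shrink(), and run Dijkstra from $s$ on $G'$ to obtain $\alpha_v = \dist{T+S}{s}{v}$ for every $v \in M$. Finally, I would load these values into $\D$ via SetAlpha and issue a single ReportFarthest() query to recover $\E_s$, then roll back the modifications performed on $\D$ before moving on to the next vertex.

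The only step that requires a proof is the identity $\E_s = \max_{u \in V(T)}\min_{v \in M}\bigl(\alpha_v + \dist{T}{v}{u}\bigr)$, which is exactly what ReportFarthest() returns. The inequality $\leq$ is immediate by concatenating a shortest $s$-to-$v$ path in $T+S$ with the unique $v$-to-$u$ path in $T$. For the reverse inequality, I would decompose any shortest $s$-to-$u$ path in $T+S$ at its last terminal $v$: the suffix from $v$ to $u$ cannot traverse a shortcut (since every shortcut endvertex lies in $M$) nor pass through any other terminal, hence it must coincide with the unique tree path from $v$ to $u$ and have length $\dist{T}{v}{u}$, while the prefix from $s$ to $v$ has length at least $\alpha_v$.

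Combining the per-iteration cost of $O(k\log n)$---dominated by MakeTerminal, ReportFarthest, and the rollback, and using $\log k = O(\log n)$ to absorb the cost of Dijkstra---with the $n$ choices of $s$ and the one-time $O(n+k)$ preprocessing yields the claimed $O(nk\log n)$ bound. I expect the main obstacle to lie not in this outer loop but in the data structure itself: supporting ReportFarthest() in $O(k\log n)$ time and the efficient rollback are nontrivial and are deferred to Appendix~\ref{app:ds}. Once $\D$ is taken as a black box with the stated time bounds, the theorem follows by the accounting above.
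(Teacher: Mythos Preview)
Your proposal is correct and follows essentially the same approach as the paper: extract a spanning tree so that $G=T+S$, binarize, and for each source $s$ mark $s$ together with the shortcut endpoints, run Dijkstra on the condensed graph over the shrunk tree to obtain the $\alpha_v$'s, and finish with a single ReportFarthest() call plus rollback. The only additions relative to the paper are your explicit spanning-tree extraction and your justification of the identity $\E_s=\max_{u}\min_{v\in M}(\alpha_v+\dist{T}{v}{u})$, both of which the paper leaves implicit.
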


Since there are $\eta = \binom{n}{2} - (n-1) = \frac{(n-1)(n-2)}{2}$ possible shortcut edges, we can solve the \kdoat problem by computing the diameter of $T+S$ for each of the $\binom{\eta}{k} = O(n^{2k})$ possible sets $S$ of $k$ shortcuts using the above algorithm. This yields the following result:

\begin{corollary}\label{cor:exact_algorithm}
The \kdoat problem on trees can be solved in time $O(n^{2k+1} \cdot k \log n)$.
\end{corollary}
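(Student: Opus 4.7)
The plan is straightforward: the corollary follows by brute-force enumeration over all candidate shortcut sets, combined with the fast diameter-computation algorithm of the preceding theorem as a subroutine. Concretely, I would iterate over every subset $S \subseteq \binom{V(T)}{2} \setminus E(T)$ of size exactly $k$, compute $\diam(T+S)$ using the $O(nk \log n)$-time algorithm, and return a set $S^*$ achieving the minimum. Correctness is immediate because the optimal solution must be one of the enumerated sets.

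For the running time, first I would count the candidate shortcuts: since every shortcut is an unordered pair of distinct vertices that is not already a tree edge, the number of available shortcuts is $\eta = \binom{n}{2} - (n-1) = \tfrac{(n-1)(n-2)}{2} = \Theta(n^2)$. Hence the number of sets $S$ of $k$ shortcuts is $\binom{\eta}{k} = O(\eta^k) = O(n^{2k})$. Each set is processed by applying the previous theorem to the graph $T+S$, which has $n$ vertices and $n + k - 1$ edges, yielding a per-set cost of $O(n k \log n)$. Multiplying the two bounds gives the claimed $O(n^{2k+1} \cdot k \log n)$ total time.

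There is no real obstacle here; the work is entirely absorbed by the fast diameter algorithm already proved, and the corollary is essentially a bookkeeping step. The only mild care needed is to note that enumerating the $k$-subsets can be done within the stated bound (for instance, by a standard combinatorial enumeration that emits each subset in $O(k)$ amortized time, which is dominated by the $O(nk \log n)$ cost of the diameter query), and to observe that triangle-inequality assumptions are not invoked, so the result holds in the non-metric setting as well as the metric one.
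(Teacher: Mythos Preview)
Your proposal is correct and matches the paper's own argument essentially verbatim: enumerate all $\binom{\eta}{k}=O(n^{2k})$ shortcut sets with $\eta=\binom{n}{2}-(n-1)$, apply the $O(nk\log n)$ diameter algorithm to each, and multiply. The extra remarks on subset enumeration and on the metric assumption being unnecessary are fine but not required.
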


\section{Linear-time approximation algorithms}\label{sec:linear_time_algorithms}

In this section we describe two $O(n)$-time approximation algorithms for metric \kdoat. The first algorithm guarantees an approximation factor of $4$ and runs in linear time when $k=O\big(\sqrt{\frac{n}{\log n}}\big)$. The second algorithm computes a $(1+\varepsilon)$-approximate solution for constant $\varepsilon > 0$, but its running time depends on the number $\lambda$ of leaves of the input tree $T$ and it is linear when $\lambda = O\left(n^{\frac{1}{(2k+2)^2}}\right)$ and $k=o(\sqrt{\log n})$.

Both algorithms use the data structure introduced in Section~\ref{sec:ds} and are based on the famous idea introduced by Gonzalez to design approximation algorithms for graph clustering~\cite{Gonzalez85}. Gonzalez' idea, to whom we will refer to as Gonzalez algorithm in the following, is to compute $h$ suitable vertices $x_1,\ldots, x_h$ of an input graph $G$ with non-negative edge costs in a simple iterative fashion.\footnote{In the original algorithm by Gonzalez, these $h$ vertices are used to define the centers of the $h$ clusters.} The first vertex $x_1$ has no constraint and can be any vertex of $G$. Given the vertices $x_1,\ldots,x_i$, with $i < h$, the vertex $x_{i+1}$ is selected to maximize its distance towards $x_1,\ldots,x_i$. More precisely,
$
x_{i+1} \in \arg\max_{v \in V(G)}\min_{1 \leq j \leq i} \dist{G}{v}{x_j}.
$
We now state two useful lemmas, the first of which is proved in~\cite{Gonzalez85}.
\begin{restatable}{lemma}{lemmagonzalez}
\label{lemma:gonzalez}
Let $G$ be a graph with non-negative edge costs and let $x_1,\dots,x_h$ be the vertices computed by Gonzalez algorithm on input $G$ and $h$. Let  $D=\min_{1\leq i < j \leq h}\dist{G}{x_i}{x_j}$. Then, for every vertex $v$ of $G$, there exists $1 \leq i \leq h$ such that $\dist{G}{v}{x_i} \leq D$.
\end{restatable}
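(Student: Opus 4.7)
The plan is to argue via a greedy-exchange style observation, focusing on the specific pair of Gonzalez vertices that realizes the minimum pairwise distance $D$.

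First I would fix indices $i^\ast < j^\ast$ with $\dist{G}{x_{i^\ast}}{x_{j^\ast}} = D$. The key is to examine the iteration of Gonzalez' algorithm in which $x_{j^\ast}$ was selected: by construction, $x_{j^\ast}$ maximizes $\min_{1\le \ell < j^\ast}\dist{G}{\cdot}{x_\ell}$ over all vertices of $G$. Hence, for an arbitrary vertex $v$,
\[
\min_{1\le \ell < j^\ast}\dist{G}{v}{x_\ell} \;\le\; \min_{1\le \ell < j^\ast}\dist{G}{x_{j^\ast}}{x_\ell}.
\]

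Next I would upper-bound the right-hand side using the pair that realizes $D$: since $i^\ast < j^\ast$, the minimum over $\ell < j^\ast$ is at most $\dist{G}{x_{j^\ast}}{x_{i^\ast}} = D$. Combining with the previous inequality yields $\min_{1\le \ell < j^\ast}\dist{G}{v}{x_\ell} \le D$, so there exists some index $i \in \{1,\dots,j^\ast-1\} \subseteq \{1,\dots,h\}$ with $\dist{G}{v}{x_i}\le D$, which is exactly the claim.

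The argument is essentially the classical analysis of farthest-first traversal, so I do not foresee a real obstacle. The only subtle point is noting that it suffices to use the greedy optimality of $x_{j^\ast}$ (rather than, say, that of $x_h$); this is what lets us pin down a specific prefix of the sequence against which to compare $v$, and it is what makes the pair $(i^\ast,j^\ast)$ achieving $D$ the natural object to focus on.
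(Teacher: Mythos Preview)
Your argument is correct and is precisely the classical farthest-first analysis. Note that the paper does not actually supply a proof of this lemma; it simply attributes it to Gonzalez~\cite{Gonzalez85}, so there is nothing to compare against beyond observing that your write-up is the standard one.
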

\begin{lemma}\label{lemma:gonzalez_fast_implementation}
Given as input a graph $G$ that is a tree and a positive integer $h$, Gonzalez algorithm can compute the vertices $x_1,\ldots, x_h$ in $O(n+h^2 \log n)$ time.
\end{lemma}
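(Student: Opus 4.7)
The plan is to use the data structure $\D$ of Section~\ref{sec:ds} to simulate Gonzalez algorithm on the tree $G$. Observe that the operation ReportFarthest() returns
\[
\max_{u \in V(G)} \min_{v \in M} \bigl(\alpha_v + \dist{G}{v}{u}\bigr),
\]
together with a vertex $u$ achieving the maximum. Hence, if we keep $\alpha_v = 0$ for every terminal $v$, the value returned by ReportFarthest() is exactly $\max_{u \in V(G)} \min_{v \in M} \dist{G}{v}{u}$, and the vertex $u$ returned is precisely the next vertex that Gonzalez algorithm would select when $M = \{x_1, \dots, x_i\}$.

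The algorithm is therefore straightforward. First, I would build $\D$ on $G$ in $O(n)$ time, pick any vertex $x_1$ of $G$, and invoke MakeTerminal$(x_1)$. This takes $O(\log n)$ time and, by the specification of MakeTerminal, automatically sets $\alpha_{x_1} = 0$; hence no SetAlpha call is required. Then, for $i = 1, \dots, h-1$, I would call ReportFarthest() to obtain the vertex $x_{i+1}$, and then MakeTerminal$(x_{i+1})$ to add it to $M$ (again with $\alpha_{x_{i+1}} = 0$).

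The correctness of the simulation follows from the observation above, together with the fact that all $\alpha_v$ remain $0$ throughout, so every ReportFarthest() call computes exactly the argmax required by Gonzalez' rule. For the running time, after the $O(n)$-time initialization, iteration $i$ performs one ReportFarthest() on a set of $i$ terminals, costing $O(i \log n)$, and one MakeTerminal call, costing $O(\log n)$. Summing over $i = 1, \dots, h-1$ yields
\[
O(n) + \sum_{i=1}^{h-1} O(i \log n) \;=\; O\bigl(n + h^2 \log n\bigr),
\]
as claimed.

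The only non-obvious point in this argument is that the data structure of Section~\ref{sec:ds} indeed supports the stated operations within the claimed bounds; this is proved in Appendix~\ref{app:ds} and is assumed here. Everything else is a direct translation of Gonzalez' greedy rule into calls to $\D$.\qed
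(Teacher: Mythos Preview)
Your proposal is correct and follows essentially the same approach as the paper's proof: build the data structure $\D$, then iteratively call MakeTerminal on the current $x_i$ and ReportFarthest to obtain $x_{i+1}$. Your version is in fact more detailed than the paper's, spelling out why keeping all $\alpha_v=0$ makes ReportFarthest coincide with Gonzalez' selection rule and giving the summation $\sum_{i=1}^{h-1} O(i\log n)$ explicitly rather than the coarser $O(h\log n)$ per iteration.
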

\begin{proof}
We can implement Gonzalez algorithm in time $O(n + h^2 \log n)$ by constructing the data structure $\D$ described in Section~\ref{sec:ds} and use it as follows. We iteratively (i) mark the vertex $x_i$ as a terminal (in $O(\log n)$ time), and (ii) query $\D$ for the  vertex $x_{i+1}$ that maximizes the distance from all terminal vertices (in $O(h \log n)$ time). \qed
\end{proof}

In the remainder of this section let $S^*$ be an optimal solution for the \kdoat instance consisting of the tree $T$ embedded in a metric space $c$, and let $D^*=\diam(T+S^*)$.

\subsubsection{The \texorpdfstring{$4$}{4}-approximation algorithm}

The $4$-approximation algorithm we describe has been proposed and analyzed by Li {\em et al.}~\cite{LiMS92} for the variant of \kdoat in which we are given a graph $G$ as input and edge/shortcut costs are uniform. Li {\em et al.}~\cite{LiMS92} proved that the algorithm guarantees an approximation factor of $\big(4+\frac{2}{D^*}\big)$; the analysis has been subsequently improved to $\big(2+\frac{2}{D^*}\big)$ in~\cite{BiloGP12}. We show that the algorithm guarantees an approximation factor of $4$ for the \kdoat problem when $c$ satisfies the triangle inequality.  

The algorithm works as follows. We use Gonzalez algorithm on the input $G=T$ and $h=k+1$ to compute $k+1$ vertices $x_{1},\ldots, x_{k+1}$. The set $S$ of $k$ shortcuts is given by the star centered at $x_1$ and having $x_2,\ldots,x_{k+1}$ as its leaves, i.e., $S=\{(x_1,x_i) \mid 2 \leq i \leq k+1\}$. The following lemma is crucial to prove the correctness of our algorithm.

\begin{lemma}[{\cite[Lemma 3]{FratiGGM15}}]
\label{lemma:frati}
Given $G=T$ and $h=k+1$, Gonzalez algorithm computes a sequence of vertices $x_1,\ldots,x_{k+1}$ with $\min_{1  \leq i \leq k+1}\dist{T}{x_i}{v} \leq D^*$ for every vertex $v$ of $G$.
\end{lemma}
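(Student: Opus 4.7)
The plan is to prove the lemma by contradiction. Suppose there is a vertex $v^* \in V(T)$ with $\dist{T}{x_i}{v^*} > D^*$ for every $i \in \{1, \ldots, k+1\}$. The monotonicity of Gonzalez's farthest-first greedy rule—namely, the value $\min_{j < i}\dist{T}{x_i}{x_j}$ is non-increasing in $i$—implies that $v^*$ is a valid choice for the would-be $(k+2)$-th Gonzalez center, and that the extended set $X := \{x_1, \ldots, x_{k+1}, v^*\}$ has all pairwise $T$-distances strictly greater than $D^*$. It therefore suffices to show: no $k+2$ vertices of $T$ can be pairwise at $T$-distance $> D^*$ whenever $\diam(T + S^*) \leq D^*$ with $|S^*| = k$.

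To this end, I would consider the Voronoi partition of $V(T)$ induced by $X$ in the $T$-metric: each $u \in X$ is associated with the cluster $A_u$ of vertices that have $u$ as a nearest member of $X$ (ties broken arbitrarily). A direct application of the triangle inequality gives $A_u \supseteq \{w : \dist{T}{u}{w} \leq D^*/2\}$, so that any vertex lying outside $A_u$ is at $T$-distance strictly greater than $D^*/2$ from $u$. I would then build an auxiliary multigraph $H$ on vertex set $X$ in which each shortcut $(a,b) \in S^*$ contributes the (possibly self-loop) edge $\{u_a, u_b\}$ determined by the clusters containing $a$ and $b$. Since $|E(H)| \leq k$, after discarding self-loops $H$ has strictly fewer than $|V(H)| - 1 = k+1$ edges and is thus disconnected.

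I would then pick two vertices $u, u' \in X$ lying in different connected components of $H$. Any path in $T + S^*$ from $u$ to $u'$ must, at some point, traverse a tree edge whose two endpoints fall in clusters sitting in different $H$-components, since shortcut traversals always preserve the $H$-component. If moreover $u$ is isolated in $H$ (no shortcut has an endpoint in $A_u$), then the prefix of the path from $u$ up to the moment it first leaves $A_u$ is composed entirely of tree edges, and this prefix must carry us to a vertex outside $A_u$; hence its length strictly exceeds $D^*/2$. A symmetric argument on the $u'$-side contributes another strict $> D^*/2$, and summing yields a total path length strictly greater than $D^*$, contradicting $\diam(T + S^*) \leq D^*$.

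The main obstacle will be handling the general case $k \geq 2$, where the degree-sum bound $2|E(H)| \leq 2k$ no longer guarantees two isolated vertices in $H$. To push the argument through, I would refine the ``isolated'' condition by tracking, for each cluster $A_u$ that does contain a shortcut endpoint, how deeply that endpoint sits inside the cluster—showing that even then a shortest $u$-to-$u'$ path must expend enough $T$-distance on the $u$-side (either to reach such an endpoint or to leave $A_u$ altogether through a cross-component tree edge) that the combined contributions from the two sides still sum beyond $D^*$. Turning this intuition into a clean charging scheme that balances in-cluster shortcut endpoints against cluster boundaries is the delicate step I expect to lie at the core of the Frati et al.\ argument.
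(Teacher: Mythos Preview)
The paper does not prove this lemma at all; it is imported verbatim from \cite[Lemma~3]{FratiGGM15} and used as a black box in the proof of Theorem~\ref{thm:4_apx}. There is therefore no in-paper argument to compare your proposal against.

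On its own merits, your proposal is not a proof but an outline with a declared gap. The first two steps are fine: the monotonicity of the Gonzalez radii does yield $k{+}2$ vertices that are pairwise at $T$-distance strictly greater than $D^*$, and reducing to the impossibility of such a configuration is the right move. The third step, however---your Voronoi-plus-connectivity argument---only goes through in the special case where two cells $A_u,A_{u'}$ contain no shortcut endpoints whatsoever, and you correctly observe that for $k\ge 2$ this cannot be guaranteed by counting alone. The ``charging scheme'' you gesture at in the final paragraph is not an argument; it is an acknowledgment that the argument is missing. There is also a minor inconsistency: you discard self-loops from $H$ to force disconnection, but then require the stronger condition ``no shortcut has an endpoint in $A_u$'' for the all-tree-edges claim on the prefix; a cell carrying only a self-loop shortcut is isolated in your pruned $H$ yet still breaks that claim.

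So the gap you identify is real, and your write-up does not close it. If you want to complete the argument yourself rather than cite it, the route taken in the reference does not go through Voronoi cells: one fixes a shortest-path tree $T_r$ of $T+S^*$ rooted at one of the $k{+}2$ well-separated points and exploits that $|E(T)\setminus E(T_r)|\le k$, which decomposes $T$ into at most $k{+}1$ subtrees that are simultaneously subtrees of $T_r$.
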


\begin{restatable}{theorem}{thmfourapx}\label{thm:4_apx}
In a \kdoat input instance in which $T$ is embedded in a metric space and $k=O\big(\sqrt{\frac{n}{\log n}}\big)$,
the algorithm computes a $4$-approximate solution in $O(n)$ time.
\end{restatable}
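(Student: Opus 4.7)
The plan is to verify the claimed linear running time and then establish the $4$-approximation guarantee. For the running time, by Lemma~\ref{lemma:gonzalez_fast_implementation}, the Gonzalez algorithm on the tree $T$ with $h=k+1$ runs in $O(n+(k+1)^2\log n)$ time, which is $O(n)$ under the assumption $k=O\big(\sqrt{n/\log n}\big)$. Constructing the star-shaped shortcut set $S$ once $x_1,\dots,x_{k+1}$ are known requires an additional $O(k)$ time.

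For the approximation guarantee, I first establish the key intermediate claim that each shortcut $(x_1,x_i)\in S$ has cost at most $D^*$. This is where the metric hypothesis enters: since every edge of $T+S^*$ has cost equal to the corresponding metric cost, iterated applications of the triangle inequality imply that every path from $x_1$ to $x_i$ in $T+S^*$ has total cost at least $c(x_1,x_i)$. Hence $c(x_1,x_i)\le \dist{T+S^*}{x_1}{x_i}\le D^*$.

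Armed with this, I bound $\diam(T+S)$ by considering an arbitrary pair $u,v\in V(T)$. By Lemma~\ref{lemma:frati}, there exist indices $i,j\in\{1,\dots,k+1\}$ such that $\dist{T}{x_i}{u}\le D^*$ and $\dist{T}{x_j}{v}\le D^*$. In $T+S$ one can then traverse from $u$ to $x_i$ along $T$ (cost $\le D^*$), from $x_i$ to $x_1$ along the shortcut in $S$ (cost $\le D^*$; vacuous if $i=1$), from $x_1$ to $x_j$ along the corresponding shortcut (cost $\le D^*$; vacuous if $j=1$), and finally from $x_j$ to $v$ along $T$ (cost $\le D^*$). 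Summing gives $\dist{T+S}{u}{v}\le 4D^*$, so $\diam(T+S)\le 4D^*$, as required.

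The main technical step is the bound $c(x_1,x_i)\le D^*$: without exploiting the triangle inequality of the metric, one could only derive an approximation guarantee that, as in the non-metric analysis of Li et al.~\cite{LiMS92}, carries an additional term depending on $D^*$. The rest of the argument is a clean four-hop routing via the star center $x_1$, combined with the covering property provided by Lemma~\ref{lemma:frati}.
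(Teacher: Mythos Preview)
Your proof is correct and follows essentially the same approach as the paper: both invoke Lemma~\ref{lemma:gonzalez_fast_implementation} for the $O(n)$ running time, use the triangle inequality together with $\diam(T+S^*)\le D^*$ to bound each shortcut cost $c(x_1,x_i)\le D^*$, apply Lemma~\ref{lemma:frati} for the covering property, and route any pair $u,v$ through $x_i$, $x_1$, $x_j$ for a total of at most $4D^*$.
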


\subsubsection{The \texorpdfstring{$(1+\varepsilon)$}{(1+epsilon)}-approximation algorithm}

We now describe an algorithm that, for any constant $\varepsilon>0$, computes a $(1+\varepsilon)$-approximate solution for metric \kdoat. The running time of the algorithm is guaranteed to be linear when $k=o(\sqrt{\log n})$ and $T$ has $\lambda=O\left(n^{\frac{1}{(2k+2)^2}}\right)$ leaves.

As usual for polynomial-time approximation schemes, we will consider only large enough instances, i.e., we will assume $n\ge n_0$ for some constant $n_0$ depending only on $\varepsilon$. We can solve the instances with $n < n_0$ in constant time using, e.g., the exact algorithm of Section~\ref{sec:diam_computation}.

In particular, we will assume that
\begin{equation}\label{eq:assumption_on_n}
    n > \left(\frac{12\lambda (k+2)^2}{\varepsilon}\right)^{2k+2}.
\end{equation}

Notice that, for any constant $\varepsilon>0$, when $k=o(\sqrt{\log n})$ and $\lambda=O\left(n^{\frac{1}{(2k+2)^2}}\right)$, the right-hand side of \eqref{eq:assumption_on_n} is in $o(n)$. As a consequence, it is always possible to choose a constant $n_0$ such that \eqref{eq:assumption_on_n} is satisfied by all $n \ge n_0$. 

The main idea is an extension of the similar result proved in~\cite{Bilo22a} for the special case $k=1$. However, we also benefit from the fast implementation we provided for Gonzalez algorithm to obtain a linear running time. 

The idea borrowed from~\cite{Bilo22a} is that of reducing the problem instance into a smaller instance formed by a tree $T'$ induced by few vertices of $T$ and by a suitable cost function $c'$. Next, we use the exact algorithm of Corollary~\ref{cor:exact_algorithm} to compute an optimal solution $S'$ for the reduced instance. Finally, we show that $S'$ is a $(1+\varepsilon)$-approximate solution for the original instance. The quasi-optimality of the computed solution comes from the fact that the reduced instance is formed by a suitably selected subset of vertices that are close to the unselected ones.

The reduced instance is not exactly a \kdoat problem instance, but an instance of a variant of the \kdoat problem in which each edge $(u,v)$ of the tree $T'$ has a known non-negative cost associated with it,  we have a shortcut for each pair of distinct vertices, and the function $c'$ determines the cost of each shortcut $(u,v)$ that can be added to the tree. Therefore, in our variant of \kdoat we are allowed to add the shortcut $(u,v)$ of cost $c'(u,v)$ even if $(u,v)$ is an edge of $T'$ (i.e., the cost of the shortcut $(u,v)$ may be different from the cost of the edge $(u,v)$ of $T'$). We observe that all the results discussed in the previous sections hold even for this generalized version of \kdoat.\footnote{This is because we can reduce the generalized \kdoat problem instance into a \kdoat instance in linear time by splitting each edge of $T'$ of cost $\chi$ into two edges, one of cost $0$ and the other one of cost $\chi$, to avoid the presence of shortcuts that are parallel to the tree edges. All the shortcuts that are incident to the added vertex used to split an edge of $T'$ have a sufficiently large cost which renders them useless.}

Let $B$ be the set of {\em branch} vertices of $T$, i.e., the internal vertices of $T$ having a degree greater than or equal to $3$. It is a folklore result that a tree with $\lambda$ leaves contains at most $\lambda-1$ branch vertices. Therefore, we have $|B| \leq \lambda-1$. 

The reduced instance $T'$ has a set $V'$ of  $\eta=2n^{\frac{1}{2k+2}}$ vertices defined as follows. $V'$ contains all the branch vertices $B$ plus $\eta-|B|$ vertices $x_1,\dots,x_{\eta-|B|}$ of $T$ that are computed using Gonzalez algorithm on input $G=T$ and $h=\eta-|B|$. By Lemma~\ref{lemma:gonzalez_fast_implementation}, the vertices $x_1,\ldots,x_{\eta-|B|}$ can be computed in $O(n+(\eta-|B|)^2\log n) = O(n)$ time. As $|B| \leq \lambda-1 = O\left(n^{\frac{1}{(2k+2)^2}}\right)$, it follows that $\eta-|B| >n^{\frac{1}{2k+2}}$.
The edges of $T'$ are defined as follows. There is an edge between two vertices $u,v \in V'$ iff the path $P$ in $T$ from $u$ to $v$ contains no vertex of $V'$ other than $u$ and $v$, i.e., $V(P) \cap V'=\{u,v\}$. The cost of an edge $(u,v)$ of $T'$ is equal to $\dist{T}{u}{v}$. Then, the cost function $c'$ of $T'$ is defined for every pair of vertices $u,v \in V'$, with $u \neq v$, and is equal to $c'(u,v)=c(u,v)$. 
Given the vertices $V'$, the $\eta-1$ costs of the edges $(u,v)$ of $T'$, that are equal to the values $\dist{T}{u}{v}$, can be computed in $O(n)$ time using a depth-first traversal of $T$ (from an arbitrary vertex of $T'$).

We use the exact algorithm of Corollary~\ref{cor:exact_algorithm} to compute an optimal solution $S'$ for the reduced instance in time $O(\eta^{2k+1}\cdot k \log \eta)=O\left(2^{2k+2}\cdot n^{1-\frac{1}{2k+2}}\cdot k \log n\right)=O(n)$.
The algorithm returns $S'$ as a solution for the original problem instance. We observe that the algorithm runs in $O(n)$ time.
In order to prove that the algorithm computes a $(1+\varepsilon)$-approximate solution, we first give a preliminary lemma showing that each vertex of $T$ is not too far from at least one of the vertices in $\{x_1,\ldots,x_{\eta-|B|}\}$. 
\begin{restatable}{lemma}{gonzalezreduced}
\label{lemma:gonzalez_reduced}
If $T$ has $\lambda=O\left(n^{\frac{1}{(2k+2)^2}}\right)$ leaves, then, for every $v \in V(T)$, there exists $i\in\{1,\ldots, \eta-|B|\}$ such that $\dist{T}{x_i}{v} \leq \frac{\varepsilon}{4(k+2)}D^*$.
\end{restatable}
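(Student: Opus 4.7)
The plan is to invoke Lemma~\ref{lemma:gonzalez}, which reduces the claim to proving that $D := \min_{1 \leq i < j \leq \eta - |B|} \dist{T}{x_i}{x_j} \leq \frac{\varepsilon}{4(k+2)} D^*$. I argue this by contradiction: suppose $D > \frac{\varepsilon}{4(k+2)} D^*$, so the $\eta - |B|$ Gonzalez points have pairwise tree-distance strictly exceeding $\frac{\varepsilon}{4(k+2)} D^*$. Fixing any optimal solution $S^*$ with $\diam(T + S^*) = D^*$, I will use $S^*$ to decompose $T$ into arcs of bounded length and then show that these arcs cannot jointly contain so many well-separated points, contradicting the output of Gonzalez.

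Let $E$ denote the set of endvertices of the shortcuts in $S^*$ (so $|E| \leq 2k$), let $\Lambda$ be the set of leaves of $T$, and set $V_c := B \cup \Lambda \cup E$, so $|V_c| \leq 2\lambda + 2k$. Since $V_c$ contains every branch vertex and every leaf of $T$, the skeleton $V_c$ partitions $T$ into at most $|V_c| - 1 \leq 2\lambda + 2k - 1$ maximal \emph{arcs}, where each arc is a sub-path of $T$ whose interior vertices have degree $2$ in $T$ and are not incident to any shortcut in $S^*$.

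The key geometric step is to bound each arc's length by $2D^*$. Fix an arc of length $\ell$ with endvertices $p_1, p_2$, and let $m$ be its tree midpoint. Because no interior vertex of the arc is an endvertex of a shortcut, every path in $T + S^*$ leaving $m$ must exit the arc through $p_1$ or $p_2$; the direct route along the arc gives $\dist{T+S^*}{m}{p_1} \leq \ell/2$, while any path of the form $m \to p_2 \to \cdots \to p_1$ has length at least $\ell/2 + \dist{T+S^*}{p_2}{p_1} \geq \ell/2$. Hence $\dist{T+S^*}{m}{p_1} = \ell/2$, and combining with $\dist{T+S^*}{m}{p_1} \leq D^*$ yields $\ell \leq 2D^*$.

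A standard packing argument then closes the proof: any arc of length $\leq 2D^*$ contains fewer than $\frac{8(k+2)}{\varepsilon} + 1$ Gonzalez points, since consecutive ones along the arc are at tree-distance at least $D > \frac{\varepsilon}{4(k+2)} D^*$. Summing over the at most $2\lambda + 2k - 1$ arcs (while tolerating the harmless double-counting of skeleton endvertices) gives a total of $O(\lambda (k+2)^2/\varepsilon)$ Gonzalez points; a direct arithmetic check using~\eqref{eq:assumption_on_n} shows that this quantity is strictly smaller than $\eta - |B| > n^{1/(2k+2)}$, contradicting the fact that Gonzalez actually outputs $\eta - |B|$ such points. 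I expect the main subtlety to be the design of the skeleton $V_c$ so that each arc's interior is a ``cul-de-sac'' with respect to $S^*$, together with the uniform verification of $\dist{T+S^*}{m}{p_1} = \ell/2$ across arcs whose endpoints may be branch vertices, leaves, or shortcut endvertices.
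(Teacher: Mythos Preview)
Your overall strategy is sound and genuinely different from the paper's (which instead decomposes $T$ into $\lambda-1$ leaf-to-branch paths, bounds each by $\diam(T)\le(3k+2)D^*$ via a separate lemma, and pigeonholes into fewer than $n^{1/(2k+2)}$ intervals). However, the key geometric step is wrong as stated: the bound $\ell\le 2D^*$ is false. Take $T$ to be the path $p_1\text{--}m_1\text{--}m_2\text{--}p_2$ with unit edge costs and $S^*=\{(p_1,p_2)\}$ with $c(p_1,p_2)=0$; this extends to a metric, the unique arc has $\ell=3$, yet $\diam(T+S^*)=1$, so $\ell=3D^*$. Your midpoint argument breaks precisely here: the continuous midpoint sits inside the edge $(m_1,m_2)$ and is not a vertex, so the inequality $\dist{T+S^*}{m}{p_1}\le D^*$ is unavailable, while rounding $m$ to $m_1$ or $m_2$ only yields distance $D^*$ to the \emph{nearer} endpoint.

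The argument is salvageable with the weaker bound $\ell\le 3D^*$. For any interior vertex at position $a$ along the arc, $\dist{T+S^*}{m}{p_1}\le D^*$ forces $a\le D^*$ or $(\ell-a)+d^{\mathrm{out}}_{12}\le D^*$, hence $a\le D^*$ or $a\ge \ell-D^*$; thus no vertex lies in the open interval $(D^*,\ell-D^*)$, which is therefore spanned by a single tree edge of cost at most $D^*$ (here the metric hypothesis enters), giving $\ell-2D^*\le D^*$. With $\ell\le 3D^*$ your packing bound becomes $\tfrac{12(k+2)}{\varepsilon}+1$ points per arc, and one still checks that $(2\lambda+2k-1)\big(\tfrac{12(k+2)}{\varepsilon}+1\big)<\tfrac{12\lambda(k+2)^2}{\varepsilon}<n^{1/(2k+2)}$ for $\lambda\ge 2$ and $\varepsilon$ bounded, so the contradiction survives.
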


\begin{restatable}{theorem}{ptaslineartime}
\label{thm:ptas_linear_time}
Let $\varepsilon > 0$ be a constant. 
Given a metric \kdoat instance with $k=o(\sqrt{\log n})$ and such that $T$ is a tree with $\lambda = O\big(n^{\frac{
1}{(2k+2)^2}}\big)$ leaves, the algorithm computes a $(1+\varepsilon)$-approximate solution in $O(n)$ time.
\end{restatable}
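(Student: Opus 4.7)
Since the running time has already been justified in the algorithm description, the plan focuses on the $(1+\varepsilon)$ approximation guarantee. The approach is to compare $\diam(T+S')$ with $D^*=\diam(T+S^*)$ by routing through the reduced instance $T'$, using a ``projection'' map $\phi:V(T)\to V'$ that sends each vertex to a nearest element of $V'$. Lemma~\ref{lemma:gonzalez_reduced} (applied to the Gonzalez vertices, which are in $V'$) guarantees $\dist{T}{v}{\phi(v)} \le \frac{\varepsilon}{4(k+2)}D^*$ for every $v$; and since $c$ satisfies the triangle inequality along tree paths, also $c(v,\phi(v)) \le \dist{T}{v}{\phi(v)}$. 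All error bounds will be expressed in units of this quantity.

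First, I would record two facts about how distances transfer between $T$ and $T'$. For any $u,v \in V'$, the edges of $T'$ along the $T'$-path from $u$ to $v$ precisely decompose the corresponding $T$-path, so $\dist{T'}{u}{v}=\dist{T}{u}{v}$. Since every shortcut in $S'$ has both endpoints in $V'$, decomposing any path of $T+S'$ into maximal tree-edge runs (whose endpoints, being shortcut endpoints, lie in $V'$) then yields $\dist{T+S'}{u}{v}=\dist{T'+S'}{u}{v}$ for all $u,v \in V'$.

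Next, I would construct a witness solution $\tilde S=\{(\phi(a),\phi(b)) : (a,b)\in S^*\}$ of at most $k$ shortcuts over $V'$, and bound $\diam(T'+\tilde S)$. For any pair $u',v'\in V'$, take a shortest $u'$--$v'$ path in $T+S^*$ that uses $m\le k$ shortcuts of $S^*$, and simulate it in $T'+\tilde S$ by projecting every shortcut endpoint via $\phi$. Each of the $2m$ projected endpoints contributes at most $\frac{\varepsilon}{4(k+2)}D^*$ of additional length both to the adjacent tree-path segment (via the triangle inequality on $T$) and to the projected shortcut (via the metric property of $c$), for a total extra length of at most $\frac{m\varepsilon}{k+2}D^* \le \frac{k\varepsilon}{k+2}D^*$. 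Hence $\diam(T'+\tilde S) \le \bigl(1+\frac{k\varepsilon}{k+2}\bigr)D^*$, and by optimality of $S'$ on the reduced instance the same bound holds for $\diam(T'+S')$.

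Finally, I would transfer this bound back to $T$ by one more projection: for arbitrary $u,v\in V(T)$, the triangle inequality in $T+S'$ together with the first fact gives $\dist{T+S'}{u}{v} \le \dist{T}{u}{\phi(u)} + \dist{T'+S'}{\phi(u)}{\phi(v)} + \dist{T}{\phi(v)}{v}$, incurring only an additional $\frac{\varepsilon}{2(k+2)}D^*$ of error. Combining the two steps gives $\diam(T+S') \le \bigl(1+\frac{(2k+1)\varepsilon}{2(k+2)}\bigr)D^* \le (1+\varepsilon)D^*$, as required. The main delicate point will be the error bookkeeping in the third step: each projected endpoint contributes to \emph{both} the adjacent tree segment and to the shortcut it joins, and the slack factor $\frac{1}{4(k+2)}$ in Lemma~\ref{lemma:gonzalez_reduced} is precisely calibrated so that the cumulative error, over at most $k$ shortcuts, stays within $\varepsilon D^*$ after including the final projection step.
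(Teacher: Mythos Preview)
Your proposal is correct and follows essentially the same route as the paper: project via $\phi$, build the witness $\tilde S=\{(\phi(a),\phi(b)):(a,b)\in S^*\}$, bound $\diam(T'+\tilde S)\le(1+\tfrac{k\varepsilon}{k+2})D^*$ by replacing each optimal shortcut with its projected version (incurring $4\cdot\tfrac{\varepsilon}{4(k+2)}D^*$ per shortcut), invoke optimality of $S'$, and then undo the projection at the two outer endpoints. The only cosmetic differences are that you project to all of $V'$ rather than just the Gonzalez vertices (harmless, since the latter are contained in $V'$), and that you phrase the error bookkeeping per projected endpoint rather than per shortcut; the resulting bound $(1+\tfrac{(2k+1)\varepsilon}{2(k+2)})D^*$ matches the paper's computation exactly.
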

\begin{proof}
We already proved through the section that the algorithm runs in $O(n)$ time. So, it only remains to prove the approximation factor guarantee.

We define a function $\phi: V(T) \rightarrow \{x_1,\ldots,x_{\eta-|B|}\}$ that maps each vertex $v \in V(T)$ to its closest vertex $x_i$, with $i \in \{1,\ldots, \eta-|B|\}$ w.r.t.\ the distances in $T$, i.e., $\dist{T}{v}{\phi(v)} = \min_{1\leq i \leq \eta-|B|} \dist{T}{v}{x_i}$.

We now show that there exists a set $S$ of at most $k$ shortcuts such that (i) each edge $e \in S$ is between two vertices in $\{x_1,\ldots,x_{\eta-|B|}\}$ and (ii) $\diam(T'+S) \leq (1+\frac{k\varepsilon}{k+2})D^*$. The set $S$ is defined by mapping each shortcut $e=(u,v) \in S^*$ of an optimal solution for the original \kdoat instance to the shortcut $\big(\phi(u),\phi(v)\big)$ (self-loops are discarded). Clearly, (i) holds. To prove (ii), fix any two vertices $u$ and $v$ of $T'$. We first show that $\dist{T+S}{u}{v} \leq (1+\frac{k\varepsilon}{k+2})D^*$ and then prove that $\dist{T'+S}{u}{v} \leq (1+\frac{k\varepsilon}{k+2})D^*$. 

Let $P$ be a shortest path in $T+S^*$ between $u$ and $v$ and assume that $P$ uses the shortcuts $e_1=(u_1,v_1), \dots, e_t=(u_t,v_t) \in S^*$, with $t \leq k$. Consider the (not necessarily simple) path $P'$ in $T+S$ that is obtained from $P$ by replacing each shortcut $e_i$ with a {\em detour} obtained by concatenating the following three paths: (i) the path in $T$ from $u_i$ to $\phi(u_i)$; (ii) the shortcut $(\phi(u_i),\phi(v_i))$; (iii) the path in $T$ from $\phi(v_i)$ to $v_i$.

The overall cost of the detour that replaces the shortcut $e_i$ in $P'$ is at most $c(e_i)+ \frac{\varepsilon}{k+2}D^*$. Indeed,using that $c(\phi(u_i),\phi(v_i)) \leq \dist{T}{u_i}{\phi(u_i)} + c(e_i) + \dist{T}{\phi(v_i)}{v_i}$ together with Lemma~\ref{lemma:gonzalez_reduced} that implies $\dist{T}{u_i}{\phi(u_i)}, \dist{T}{\phi(v_i)}{v_i} \leq \frac{\varepsilon}{4(k+2)}D^*$, we obtain 
\begin{multline*}
\dist{T}{u_i}{\phi(u_i)} + c(\phi(u_i),\phi(v_i)) + \dist{T}{\phi(v_i)}{v_i}  \\
\leq 2\dist{T}{u_i}{\phi(u_i)} + c(e_i) + 2\dist{T}{\phi(v_i)}{v_i} 
                        \leq c(e_i) + \frac{\varepsilon}{k+2}D^*.
\end{multline*}
As a consequence, $c(P') \leq c(P)+\frac{t\varepsilon}{k+2} D^* \leq c(P)+\frac{k\varepsilon}{k+2}D^*$ and, since $c(P) \leq \diam(T+S^*) \leq D^*$, we obtain $c(P') \leq (1+\frac{k\varepsilon}{k+2})D^*$.

To show that $\dist{T'+S}{u}{v} \leq (1+\frac{k\varepsilon}{k+2})D^*$, i.e., that $\diam(T'+S) \leq (1+\frac{k\varepsilon}{k+2})D^*$, it is enough to observe that $P'$ can be converted into a path $P''$ in $T'+S$ of cost that is upper bounded by $c(P')$. More precisely, we partition the edges of $P'$ except those that are in $S$ into subpaths, each of which has two vertices in $\{x_1,\ldots,x_{\eta-|B|}\}$ as its two endvertices and no vertex in $\{x_1,\ldots,x_{\eta-|B|}\}$ as one of its internal vertices. The path $P''$ in $T'+S$ is defined by replacing each subpath with the edge of $T'$ between its two endvertices. Clearly, the cost of this edge, being equal to the distance in $T$ between the two endivertices, is at most the cost of the subpath. Therefore, the cost of $P''$ in $T'+S$ is at most the cost of $P'$ in $T+S$; hence $\dist{T'+S}{u}{v} \leq (1+\frac{k\varepsilon}{k+2})D^*$.

We conclude the proof by showing that the solution $S'$ computed by the algorithm satisfies $\diam(T+S') \leq (1+\varepsilon)D^*$. The solution $S'$ is  an optimal solution for the reduced instance. As a consequence, $\diam(T'+S') \leq \diam(T'+S) \leq (1+\frac{k\varepsilon}{k+2})D^*$. Let $u$ and $v$ be any two vertices of $T$. We have that $\dist{T'+S'}{\phi(u)}{\phi(v)} \leq (1+\frac{k\varepsilon}{k+2})D^*$. Moreover, by Lemma~\ref{lemma:gonzalez_reduced}, we have that $\dist{T}{u}{\phi(u)},\dist{T}{v}{\phi(v)} \leq \frac{\varepsilon}{4(k+2)}D^*$. Therefore,
\begin{align*}
\dist{T+S'}{u}{v}   & \leq \dist{T}{u}{\phi(u)} + \dist{T+S'}{\phi(u)}{\phi(v)} + \dist{T}{v}{\phi(v)} \\
                    & \leq \frac{\varepsilon}{4(k+2)}D^* + \dist{T'+S'}{\phi(u)}{\phi(v)} + \frac{\varepsilon}{4(k+2)}D^*\\
                    & \leq \frac{2\varepsilon}{4(k+2)}D^*+ \left(1+\frac{k\varepsilon}{k+2}\right)D^* < (1+\varepsilon)D^*.
\end{align*}
Hence $\diam(T+S') \leq (1+\varepsilon)D^*$. This completes the proof.
\end{proof}

\bibliography{bibliography.bib}

\newpage
\appendix

\section{Transforming \texorpdfstring{$T$}{T} into a binary tree}
 \label{sec:binarizzazione}
 
 In Section~\ref{sec:ds} we assumed that the input tree $T$ (rooted in some arbitrary vertex $r$) is binary. If this is not the case, we can preliminarily transform $T$ into a tree $T'$ that contains all the vertices of $T$ plus at most $O(n)$ additional vertices and ensures that the diameter of $T'+S$ coincides with the diameter of $T+S$. 
 
 In order to do that we start, with $T' = T$ and we iteratively reduce the degree of each vertex $v$ that has degree at least $3$ until there is no such vertex left. To reduce the degree of $v$ we remove all edges from $v$ to its children $v_1, v_2, \dots, v_h$, we add a full binary tree\footnote{A binary tree is full if all its internal nodes have degree $3$ except the root $r$.} with exactly $h$ leaves $u_1, u_2, \dots, u_h$ whose root coincides with $v$, and we add all the edges $(u_i, v_i)$ for $i=1, \dots, h$. We set the cost of the novel edges in the binary tree to $0$ and the cost of the edge $(u_i, v_i)$ to the original cost of edge $(v, v_i)$. See Figure~\ref{fig:binary} for an example.

 A vertex with degree $h \ge 3$ causes the addition of $2h-2$ new vertices, therefore the resulting tree $T'$ contains at most $2n-2$ more vertices w.r.t.\ $T$.
 The above transformation preserves all distances between vertices in $T$, does not increase the diameter (indeed, the eccentricity of a new vertex $u$ is at most the eccentricity of the vertex $v$ in $T$ that becomes the root of the binary tree containing $u$), and can be carried out in time $O(n)$.
 
  \begin{figure}[t]
     \centering
     \includegraphics[scale=1.35]{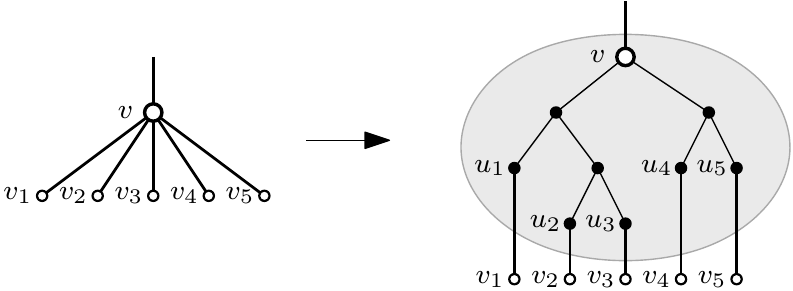}
     \caption{Transforming a generic tree $T$ into a binary tree $T'$. On the left side we have a vertex $v$ of $T$ with 5 children. On the right side we show the portion of the tree $T'$ in which all the edges $(v,v_i)$ of $T$, with $1 \leq i \leq 5$, have been replaced by a suitable gadget.}
     \label{fig:binary}
 \end{figure}

\section{Omitted proofs}
\label{app:omitted_proofs}

\lemmalbtreesgeneralized*
\begin{proof}
We extend the construction given in the proof of Lemma~\ref{lemma:lb_trees} to general values of $k=o(n)$.
We modify the input tree $T$ as follows. Let $Z \subseteq L_1$ be a set of size $|Z|=k-3$ and let $y \in L_4$ be fixed. The input tree $T$ is obtained from the input tree defined in the proof of Lemma~\ref{lemma:lb_trees} where each edge $(x_1,z)$, with $z \in Z$, is swapped with the edge $(y,z)$, i.e., we remove the edge $(x_1,z)$ and add the edge $(y,z)$. We set the cost of the new edge $c(y,z)$ to $9$ and notice that this coincides with the distance between $y$ and $z$ in $G_{a,b}$ and in $G$.
Similarly to the proof of Lemma~\ref{lemma:lb_trees}, we just need to prove the following three properties: (1) each instance $I_{a,b}$ admits a solution $S$ such that $\diam(T+S) \le 9$; (2) all solutions $S$ to $I$ are such that $\diam(T+S) \ge 10$; (3) if $u \neq a$ or $v \neq b$ then $\dist{G}{u}{v} = \dist{G_{a,b}}{u}{v}$.

To prove (1), we just need to observe that for any instance $I_{a,b}$, the set $S=\{(x_1,a),(a,b),(b,x_4)\} \cup \{(x_1,z) \mid z \in Z\}$ is such that $\diam(T+S)=9$.

Concerning (2), let us consider any solution $S$ of $k$ shortcuts. We prove if $S$ does not contain all the shortcuts $(z, x_1)$ with $z \in Z$, then $\diam(T+S) \ge 11$.
To see this, suppose that there is a vertex $z \in Z$ such that $(z, x_1) \not\in S$, and let $v \in L_4 \setminus \{y\}$ with no incident shortcuts in $S$. Then $\diam(T+S) \ge \dist{T+S}{v}{z} = \dist{T+S}{v}{x_4} +  \dist{T+S}{x_4}{z} \ge \dist{G}{v}{x_4} +  \dist{G}{x_4}{z} \ge 2 + 9 = 11$.

Notice that, once these $k-3$ shortcuts $(z,x_1)$ are forced to be in $S$, we are again in the situation of Lemma~\ref{lemma:lb_trees}, hence property (2) follows from the same arguments given in the proof of Lemma~\ref{lemma:lb_trees}. 

Finally, property (3) trivially holds since no distance in $G$ or $G_{a,b}$ changes as a consequence of our modifications to $T$. 

The rest of the proof is identical to that of Lemma~\ref{lemma:lb_trees}. \qed
\end{proof}

\thmfourapx*
\begin{proof}
By Lemma~\ref{lemma:gonzalez_fast_implementation}, the vertices $x_1,\ldots,x_{k+1}$ can be computed in $O(n+k^2\log n) = O(n)$ time. So, it only remains to prove that  $\diam(T+S) \leq 4D^*$. We show this by proving that $\dist{T+S}{u}{v} \leq 4D^*$ for every two vertices $u,v \in V(T)$. Let $x_i$ (resp., $x_j$) the vertex that is closest to $u$ (resp., $v$) in $T$. By Lemma~\ref{lemma:frati}, we have that $\dist{T}{x_i}{u},\dist{T}{x_j}{v} \leq D^*$. Moreover, as $\diam(T+S^*) \leq D^*$ and $c$ satisfies the triangle inequality, we have that $c(x_1,x_t) \leq D^*$ for every $1\leq t \leq k+1$. Therefore, the path in $T+S$ from $u$ to $v$ that goes through $x_i$ and $x_j$ and uses at most 2 shortcuts has a length of at most $\dist{T}{u}{x_i} + \dist{T+S}{x_i}{x_j} + \dist{T}{x_j}{v} \leq D^*+2D^*+D^* \leq 4D^*$. \qed
\end{proof}

\subsection{Proof of Lemma~\ref{lemma:gonzalez_reduced}}
\label{app:proof_thm_ptast}

In this section we prove Lemma~\ref{lemma:gonzalez_reduced}. We first provide a useful lemma which shows an upper bound to the diameter of $T$ that depends only on $k$ and on $D^*$.

\begin{lemma}
\label{lemma:ub_diameter_input_tree}
$\diam(T) \leq (3k+2)D^*$.
\end{lemma}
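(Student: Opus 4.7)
My plan is to bound $\diam(T)$ via a detour argument applied to a shortest $T'$-path between the diametral pair of $T$, combined with a midpoint-type bound on the $T$-distances between shortcut endpoints.

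Let $u,v$ be a diametral pair in $T$, and let $P$ be a shortest $u$-to-$v$ path in $T'=T+S^*$, so that $|P|\le D^*$. Decomposing $P$ as tree subpaths alternating with at most $k$ shortcut edges $e_{i_1}=(a_{i_1},b_{i_1}),\dots,e_{i_t}=(a_{i_t},b_{i_t})$, and replacing each shortcut $e_{i_j}$ by the unique tree path from $a_{i_j}$ to $b_{i_j}$ in $T$, produces a walk from $u$ to $v$ entirely in $T$ whose length is at most $|P|+\sum_{j=1}^t \dist{T}{a_{i_j}}{b_{i_j}}$. Since $\diam(T)=\dist{T}{u}{v}$ is bounded by the length of any $u$-to-$v$ walk in $T$, I obtain
\[
\diam(T)\;\le\;D^*+\sum_{j=1}^{t} \dist{T}{a_{i_j}}{b_{i_j}}.
\]

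To bound each $\dist{T}{a_{i_j}}{b_{i_j}}$ by a constant multiple of $D^*$, I would pass to the minimal Steiner subtree $T_{V'}$ of $T$ spanning $V'=\{u,v\}\cup\bigcup_i\{a_i,b_i\}$ (with $|V'|\le 2k+2$) and apply a midpoint argument to each ``major edge'' of the topologically condensed version of $T_{V'}$ (i.e., each maximal subpath of $T_{V'}$ whose interior vertices all have degree $2$ in $T_{V'}$). By construction no vertex of $V'$ lies strictly inside such a major edge, so its midpoint $m$ has no shortcut endpoint strictly closer (in $T$) than its two endpoints. Consequently any $T'$-path from $m$ to an endpoint must first traverse at least half of the major edge in the tree before a shortcut becomes useful, yielding $\dist{T'}{m}{\text{endpoint}}=(\text{weight of the major edge})/2$, which is at most $D^*$ by the eccentricity bound on $m$ in $T'$. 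Thus each major edge has $T$-weight at most $2D^*$, and $\dist{T}{a_{i_j}}{b_{i_j}}$ equals the sum of the weights of the major edges traversed by the $a_{i_j}$-to-$b_{i_j}$ path in $T_{V'}$.

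Finally, I would combine these bounds with the combinatorial structure of the condensed $T_{V'}$ (whose at most $2k+2$ leaves imply at most $2k+1$ major edges on any leaf-to-leaf path) and amortize each major edge across the multiple $(a_{i_j},b_{i_j})$-paths that contain it, to obtain $\sum_j \dist{T}{a_{i_j}}{b_{i_j}} \le (3k+1)D^*$, which together with the displayed inequality gives $\diam(T)\le(3k+2)D^*$. The main obstacle is precisely this amortization step: a naive ``multiply-and-sum'' bound (each of up to $k$ terms at $2D^*$ times up to $2k+1$ major edges) yields only a quadratic-in-$k$ bound, and obtaining the claimed linear constant $3k+2$ requires carefully tracking the reuse of major edges across distinct $(a_{i_j},b_{i_j})$-paths and balancing this overhead against the $D^*$-budget already consumed inside $P$.
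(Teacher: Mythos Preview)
Your proposal has a genuine gap that you yourself flag: the amortization turning the per-major-edge bound into $\sum_j d_T(a_{i_j},b_{i_j})\le(3k+1)D^*$ is never carried out, and I do not see how to complete it along these lines. A sharper diagnosis is that your argument, as written, never invokes the hypothesis that $c$ is a metric, yet the lemma is false without it: take the path $v_0,\dots,v_n$ with $c(v_0,v_1)=L$ arbitrarily large, all other tree edges of cost $\varepsilon$, and the single shortcut $(v_0,v_2)$ of cost $2\varepsilon$ (the instance is non-metric because of the heavy tree edge); then $D^*=O(n\varepsilon)$ while $\diam(T)\ge L$. Your midpoint step breaks here since the only interior vertex of the major edge from $v_0$ to $v_2$ is $v_1$, and $\min\{d_T(v_0,v_1),d_T(v_1,v_2)\}=\varepsilon$ gives nothing useful. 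Metricity must therefore enter somewhere; once you use it to bound every tree edge by $D^*$, the midpoint argument can be repaired to roughly $w\le 3D^*$ per major edge, but a direct count along the $u$--$v$ path still yields only something like $(6k+3)D^*$, not $(3k+2)D^*$.

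The paper's proof takes a completely different and much shorter route that sidesteps the amortization entirely. Fix any vertex $r$ and let $T_r$ be a shortest-path tree of $T+S^*$ rooted at $r$; then $X:=E(T)\setminus E(T_r)$ satisfies $|X|\le k$. The forest $T_r-S^*$ coincides with $T-X$ and has at most $k+1$ components, each a subtree of $T_r$ and hence of diameter at most $2D^*$. Metricity is used exactly once: every tree edge $(x,y)\in X$ has $c(x,y)\le d_{T+S^*}(x,y)\le D^*$. Since any path in $T$ decomposes into pieces inside components of $T-X$ joined by edges of $X$, one obtains $\diam(T)\le(k+1)\cdot 2D^*+k\cdot D^*=(3k+2)D^*$.
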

\begin{proof}
Fix any vertex $v \in V(T)$ and let $T_v$ be a shortest path tree of $T+S^*$ rooted at $v$. Let $X = E(T) \setminus E(T_v)$ be the set of edges of $T$ that are not contained in $T_v$. We have that $|X| \leq k$ as $T+S^*$ contains at most $n-1+k$ edges.
The diameter of $T$ can be upper bounded by the overall sum of the diameters of the connected components in $T_v-S^*$ plus the overall sum of the costs of all the edges in $X$. 

The key observation to prove the claimed bound on the diameter of $T$ is that any two vertices are at a distance of at most $2D^*$ in $T_v$ as $\diam(T+S^*)=D^*$. This implies that
$T_v-S^*$ contains at most $k+1$ connected components, each of which has a diameter of at most $2D^*$.  Moreover, using the fact that $c$ satisfies the triangle inequality, the cost of each  edge $(u,v) \in X$ is at most $D^*$ as $c(u,v)$ is the cost of a shortest path from $u$ to $v$, and we know that $u$ and $v$ are at a distance of at most $D^*$ in $T+S^*$. Therefore,
$\diam(T) \leq (k+1)\cdot 2D^* + k\cdot D^* = (3k+2)D^*$. \qed
\end{proof}

The bound on the diameter of $T$ proved in Lemma~\ref{lemma:ub_diameter_input_tree} allows us to prove Lemma~\ref{lemma:gonzalez_reduced}.
\gonzalezreduced*
\begin{proof}
Let ${\mathcal P} =\{P_1,\dots,P_{\lambda-1}\}$ be the $\lambda-1$ paths obtained by partitioning the edges of $T$ as follows. Consider a copy $\widetilde{T}$ of $T$. As long as $\widetilde{T}$ is not the empty graph, we take a path $P$ from a leaf of $\widetilde{T}$ towards its closest branch vertex (if such a branch vertex does not exist, then $P=\widetilde{T}$). We add $P$ to our collection $\mathcal{P}$, and we update $\widetilde{T}$ by removing all the edges of $P$ from $\widetilde{T}$ and all the singleton vertices.

Fix a path $P_i$ and let $l_i$ be an endvertex of $P_i$. We subdivide $P_i$ into $\frac{n^{1/(2k+2)}}{\lambda}$ {\em intervals}, each interval having a width equal to $\frac{\lambda\cdot \diam(P_i)}{n^{1/(2k+2)}}$.  If we combine Lemma~\ref{lemma:ub_diameter_input_tree} with inequality~\eqref{eq:assumption_on_n}, we can bound the width of an interval by 
\begin{equation}\label{eq:interval_width}
\frac{\lambda\cdot (3k+2)D^*}{n^{1/(2k+2)}} \leq \frac{\varepsilon\lambda\cdot (3k+2)D^*}{12\lambda (k+2)^2} < \frac{3\varepsilon\lambda\cdot (k+2)D^*}{12\lambda (k+2)^2} = \frac{\varepsilon}{4(k+2)}D^*.
\end{equation}
The $j$-th interval contains all vertices $v$ of $P_i$ such that
\[
\frac{(j-1)\cdot \lambda\cdot \diam(P_i)}{n^{1/(2k+2)}} \leq \dist{P_i}{l_i}{v} \leq \frac{j\cdot \lambda\cdot  \diam(P_i)}{n^{1/(2k+2)}}.
\]
We observe that each vertex of $T$ is contained in at least one interval. We also observe that the overall number of intervals is at most $n^{\frac{1}{2k+2}}$.
Let $x_1,\ldots,x_{\eta-|B|}$ be the subset of vertices of $T'$ computed by running Gonzalez algorithm on $T$. Let $D=\min_{1\leq i < j \leq \eta-|B|}\dist{T}{x_i}{x_j}$. 
Since $\eta-|B| >n^{\frac{1}{2k+2}}$,
we have that the number $\eta-|B|$ of vertices computed using Gonzalez algorithm is strictly larger than the overall number of intervals. Therefore, by the pigeonhole principle, $D$ is bounded by the width of the largest interval which, using inequality~\eqref{eq:interval_width}, has a width of at most $\frac{\varepsilon}{4(k+2)}D^*$. As a consequence, by Lemma~\ref{lemma:gonzalez}, for every $v \in V(T)$, there exists $i\in\{1,\ldots, \eta-|B|\}$ such that $\dist{T}{x_i}{v} \leq D \leq \frac{\varepsilon}{4(k+2)}D^*$. The claim follows. \qed
\end{proof}

\pagebreak

\section{An auxiliary data structure: implementation}
\label{app:ds}

In this section we show how to implement the data structure described in Section \ref{sec:ds}

Our data structure consists of:
\begin{itemize}
    \item An array that stores, for each vertex $v$ in $T$, whether $v$ is a terminal or a non-terminal vertex, along with the value $\alpha_v$ for terminal vertices.
    \item An oracle of size $O(n)$ that is able to report, in constant time, both the distance $\dist{T}{u}{v}$ and the number of edges of the path between any two vertices $u$,$v$ in $T$.
    \item An oracle of size $O(n)$ that can answer \emph{lowest common ancestor} \cite{HarelT84} and \emph{level ancestor} \cite{BerkmanV94, BenderFC04} queries on $T$ in constant time.
    Given two vertices $u$, $v$, a lowest common ancestor query returns $\lca_T(u,v)$. 
    Given a vertex $v$ and a non-negative integer $\ell$, a level ancestor query returns the ancestor of $v$ at hop-distance $\ell$ from $v$ in $T$.  
    \item A \emph{link-cut} tree \cite{SleatorT83} that stores the shrunk tree $T(M)$ w.r.t.\ the current set $M$ of terminal vertices. A link-cut tree maintains a dynamic forest under vertex additions, vertex deletions, edge insertions (link), and edge deletions (cut). Each operation requires time $O(\log m)$, where $m=O(n)$ is the number of nodes in the forest. 
    \item A \emph{top-tree} \cite{AlstrupHDLT05} $\T_M$ that stores the tree $T$, supports marking and unmarking vertices, and is able to report the closest marked ancestor of given query node in $O(\log n)$ time per operation. The marked vertices in $\T_M$ will be exactly the vertices in $T(M)$. 
    \item A \emph{top-tree}  $\T$ that stores a forest of (edge-weighted) trees under link and cut operations and that, given a vertex $v$, can report the eccentricity of $v$ in the unique tree of the forest that contains $v$.\footnote{
    \cite{AlstrupHDLT05} already shows how to maintain the diameter of each tree in $\T$. It is not difficult to modify the above query to also retrieve vertex eccentricities.
    Alternatively, one can employ the following black-box solution to report the eccentricity of $v$: find the diameter $D$ of the tree containing $v$, link $v$ with an auxiliary vertex $u$ via an edge $(u,v)$ of weight $D$, compute the new diameter $D'$ of the tree containing $v$, cut $(u,v)$ from $\T$, and return $D'-D$.} Each query and update operation requires $O(\log m)$ time, where $m=O(n)$ is the number of nodes in the forest.
\end{itemize}

All of the above components can be initialized in time $O(n)$. We now describe how the operations are implemented. 

\paragraph*{MakeTerminal($v$):}
We start by updating the type of vertex $v$ and setting $\alpha_v$ to $0$ in constant time.
Next, we update the link-cut tree storing the shrunk tree $T(M)$ to account for the new terminal vertex. Notice that, since $T$ is a binary tree, so is any shrunk version of $T$ as it contains all pairwise lowest common ancestors. We can assume that $v$ was not already in $T(M)$ as a Steiner vertex, otherwise there is nothing to do.

Following the update, $T(M)$ changes in one of the following three ways (see Figure~\ref{fig:shrunk_tree_addition}):
\begin{enumerate}[(i)]
    \item $v$ becomes a new leaf of $T(M)$ dangling from some existing vertex $x$.
    \item an edge $e$ of $T(M)$ is split by the new vertex $v$.
    \item an edge $e$ of $T(M)$ is split by a new Steiner vertex $y$, which will be the parent of $v$.
\end{enumerate}

 \begin{figure}[t]
     \centering
     \includegraphics[width=\textwidth]{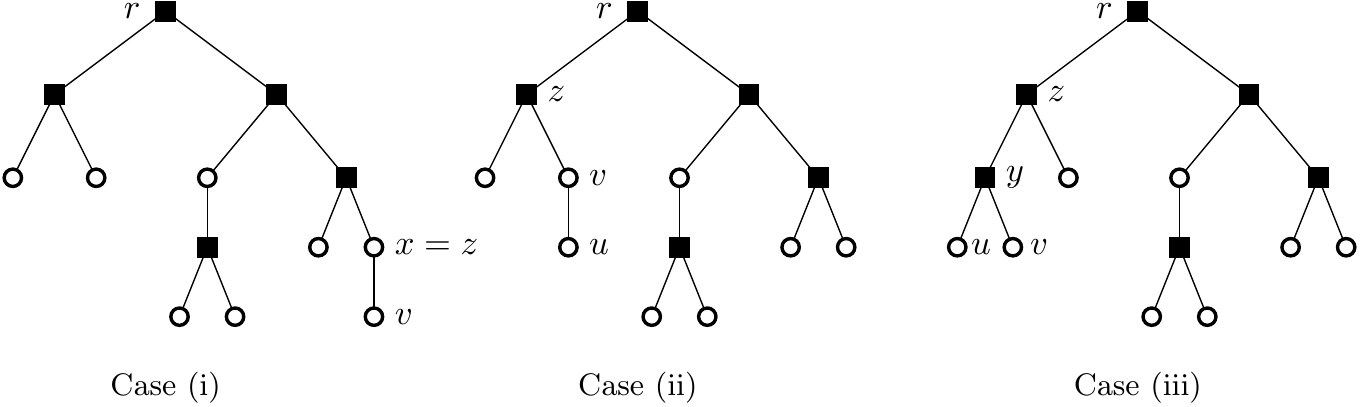}
     \caption{The three cases of updating the shrunk tree $T(M)$ after the call of MakeTerminal($v$). Case (i) shows the tree update when we only need to add $v$ as a leaf appended to some existing terminal vertex $x$. Case (ii) shows the tree update when $v$ splits an edge $(z,u)$ into two edges $(z,v)$ and $(v,u)$. Case (iii) shows the tree update when we need to add $v$ as a leaf appended to a new Steiner vertex $y$ that splits the edge $(z,u)$ into two edges $(z,y)$ and $(y,u)$. Case (iii) is the only case in which we also need to add a new Steiner vertex ($y$ in this example) that corresponds to the lowest common ancestor between $v$ and some other vertex ($u$ in this example). The tree $T$ is depicted in Figure~\ref{fig:shrunk_tree} and the vertex $v$ added in cases (i), (ii), and (iii) is the vertex $v_1$, $v_2$, or $v_3$ in Figure~\ref{fig:shrunk_tree}.}
     \label{fig:shrunk_tree_addition}
 \end{figure}

In order to distinguish the above cases, we query (in $O(\log n)$ time) $\T_M$ to find the closest marked ancestor $z$ of $v$. Next, we query (in constant time) the LCA oracle for the lowest common ancestors between $v$ and each of the (at most two) children of $z$ in $T(M)$. 
If all the LCA queries (possibly none) return $z$, then we are in case (i) with $x=z$ and we can simply add vertex $v$ to the link-cut tree and perform link operation to add the edge $(v,x)$.
Otherwise, there is some child of $u$ of $z$ in $T(M)$ such that $\lca(v,u) \neq z$. If $\lca(v,u) = v$ we are in case (ii) and the split edge is $(z,u)$. We delete $(z,u)$ from the link-cut tree, we add vertex $v$, and perform two link operations to add edges $(z,v)$ and $(v,u)$.
Finally, when there is some child of $u$ of $z$ in $T(M)$ such that $\lca(v,u) \not\in \{z,v\}$ we are in case (iii) with $y = \lca(v,u)$. We handle this case by cutting $(z,u)$ from the link-cut tree, inserting the new vertices $y$ and $v$, and performing $3$ link operations to add the edges $(z,y)$, $(y,u)$, and $(y,v)$.

In all of the above cases, we only perform a constant number of updates on the link-cut tree and hence the overall time spent is $O(\log n)$.

We conclude the operation by ensuring that the marked vertices of $\T_M$ are kept  up to date. We do so by marking $v$ and possibly $y$ (if we are in case (iii)) in time $O(\log n)$.

\paragraph*{Shrink():} We return (a copy of) the tree $T(M)$ maintained by the link-cut tree. The required time is linear in the size of $T(M)$, which is in $O(k)$.

\paragraph*{SetAlpha($v, \alpha$):} We simply set $\alpha_v$ to $\alpha$ in constant time.

\paragraph*{ReportFarthest():}

This operation works in three phases.

\textit{Phase 1.} In the first phase we propagate the quantities $\alpha_v$ from terminal vertices towards all vertices in $T(M)$.
In particular, given vertex $u$ in $T(M)$, we compute
$\beta_u = \min_{v \in M} \left( \alpha_v + \dist{T}{v}{u} \right)$ and we let $\nu_u$ denote the vertex $v$ for which the above minimum is attained.
Since the number of vertices of $T(M)$ is $O(|M|)$, all  the values $\beta_u$ and $\nu_u$ can be found in $O(|M|)$ time by performing a postorder visit of $T$ followed by a preorder visit of $T(M)$, while using the distance-reporting oracle to find the needed distances between vertices in $T$ in constant time. 

\textit{Phase 2.} The goal of the second phase is that of decomposing $\T$ into a forest that contains exactly one tree $T_u$ for each vertex $u$ in $T(M)$. Roughly speaking, the tree $T_u$ will contain all vertices $v$ that are ``closer'' to $u$ than to any other vertex $u'$ in $T(M)$, when the values $\alpha_u$ and $\alpha_{u'}$ are taken into account.

We now formalize the above intuition. We start by providing a tie-breaking scheme that will ensure that the sought partition is unique. 
To this aim we fix an arbitrary order of the vertices in $T$.
Given a vertex $u$ in $T(M)$ and a vertex $v$ in $T$ (possibly $u=v$), we define $\dtie{T}{u}{v}$ as the tuple $(\dist{T}{u}{v}, \ell, u)$, where $\ell$ is the number of edges in the path from $u$ to $v$ in $T$. We will also write $w + \dtie{T}{u}{v}$ as a shorthand for $(w + \dist{T}{u}{v}, \ell, u)$.
By comparing tuples lexicographically we obtain distance function that provides a total strict order relation over all possible pairs $(u,v)$. This distance preserves suboptimality of shortest paths and agrees with $\dist{T}{u}{v}$ in the sense that $\dist{T}{u}{v} < \dist{T}{u'}{v}$ implies $\dtie{T}{u}{v} < \dtie{T}{u'}{v}$.

We now consider each edge in $T(M)$. Any such edge $e=(u,v)$  corresponds to a unique simple path $P$ between the vertices $u$ and $v$ in $T$. 
Let $u=x_1, x_2, \dots, x_{|P|}=v$ be the vertices of $P$ as traversed from $u$ to $v$ and notice that, thanks to our distance reporting and level ancestor oracles, we know the number $|P|-1$  of edges in $P$ and we can access the generic $i$-th edge/vertex of $P$ in constant time.
Since $\dtie{T}{u}{x_i}$ and $\dtie{T}{v}{x_i}$ are monotonically increasing and monotonically decreasing w.r.t.\ $i$, respectively, we can binary search for the smallest index $i \ge 2$ such that $\beta_u + \dtie{T}{u}{x_i} > \beta_v + \dtie{T}{v}{x_i}$. Once such index $i$ has been found, we cut the edge $(x_{i-1}, x_i)$ from $\T$. See Figure~\ref{fig:bs_cut} for an example. The time needed to find all the $O(k)$ indices $i$ and to perform the corresponding cut operations is $O(k\log n)$ (i.e., $O(\log n)$ for each  index). 

\begin{figure}[t]
    \centering
    \includegraphics[scale=0.9]{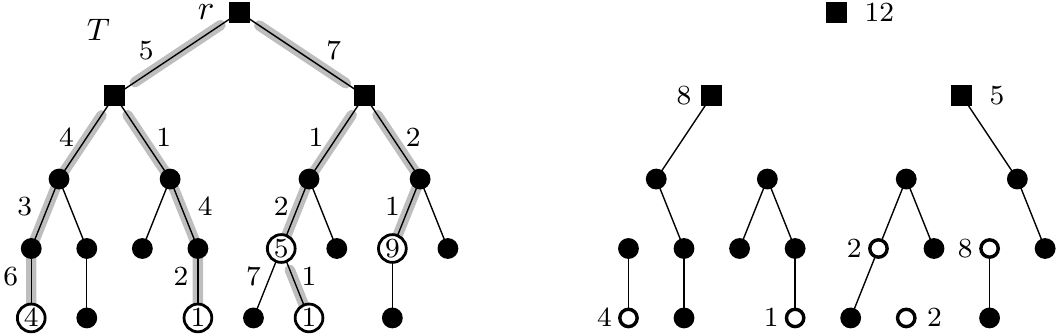}
    \caption{Left: The tree $T$ in which terminal vertices $v$ are depicted as white circles labelled with their value $\alpha_v$, while Steiner vertex are black squares. The paths $P$ of $T$ corresponding to edges in $T(M)$ are shaded.
    Right: the forest obtained after cutting the edges $(x_{i-1}, x_i)$ of each path $P$ from $\T$. Each tree contains exactly one vertex $u$ in $T(M)$. Each vertex $u$ in $T(M)$ is labelled with $\beta_u$.}
    \label{fig:bs_cut}
\end{figure}

After all the edges $e$ in $T(M)$ have been processed, $\T$ stores a forest with one tree $T_u$ per vertex $u$ in $T(M)$. Such a tree $T_u$ contains exactly the vertices $v$ of $T$ that satisfy $\beta_u + \dtie{T}{u}{v} < \beta_{u'} + \dtie{T}{u'}{v}$ for every vertex $u' \neq u$ in $\T$. 

\textit{Phase 3.} In the third and last phase we compute the answer to the query and restore the original state of the data structure, so that we are ready to handle future operations.

To answer the ReportFarthest() query, we query each of the $O(k)$ trees $T_u$ in $\T$ for the vertex $u^*$ that is farthest from $u$ in $T_u$. The total time needed is $O(k \log n)$, i.e., $O(\log n)$ per query.
If we consider a vertex $z \in \arg \max_{u \in T(M)} ( \beta_u + \dist{T}{u}{u^*} )$, we have that the sought answer is exactly $\beta_z + \dist{T}{z}{z^*}$ while the corresponding pair of vertices in $T$ is $(\nu_z, z^*)$.

Finally, we undo all the cut operations by re-linking all the edges cut from $T$. Since each link operation on $\T$ takes $O(\log n)$ time, and there are $O(k)$ edges in $T(M)$, the overall time needed is $O(k \log n)$.

\end{document}